\begin{document}
\title{\sys{}: Herd Immunity against Storage Rollback Attacks in TEEs\tr{ [Technical Report]}{}}

\tr{
    \setcopyright{cc}
    \setcctype[4.0]{by}

    \settopmatter{printacmref=false}
}{
\setcopyright{cc}
\setcctype{by}
\acmJournal{PACMMOD}
\acmYear{2026} \acmVolume{4} \acmNumber{1 (SIGMOD)} \acmArticle{79} \acmMonth{2} \acmPrice{}\acmDOI{10.1145/3786693}
}

\begin{CCSXML}
<ccs2012>
   <concept>
       <concept_id>10002978.10003006</concept_id>
       <concept_desc>Security and privacy~Systems security</concept_desc>
       <concept_significance>500</concept_significance>
       </concept>
   <concept>
       <concept_id>10010520.10010575</concept_id>
       <concept_desc>Computer systems organization~Dependable and fault-tolerant systems and networks</concept_desc>
       <concept_significance>500</concept_significance>
       </concept>
 </ccs2012>
\end{CCSXML}

\ccsdesc[500]{Security and privacy~Systems security}
\ccsdesc[500]{Computer systems organization~Dependable and fault-tolerant systems and networks}

\author{David C. Y. Chu}
\orcid{0000-0001-9922-1994}
\affiliation{
    \institution{University of California, Berkeley}
    \country{USA}
}
\email{thedavidchu@berkeley.edu}

\author{Aditya Balasubramanian}
\orcid{0009-0005-8524-9490}
\affiliation{
    \institution{University of California, Berkeley}
    \country{USA}
}
\email{aditbala@berkeley.edu}

\author{Dee Bao}
\orcid{0009-0008-7592-6904}
\affiliation{
    \institution{University of California, Berkeley}
    \country{USA}
}
\email{dbao3@berkeley.edu}

\author{Natacha Crooks}
\orcid{0000-0002-3567-801X}
\affiliation{
    \institution{University of California, Berkeley}
    \country{USA}
}
\email{ncrooks@berkeley.edu}

\author{Heidi Howard}
\orcid{0000-0001-5256-7664}
\affiliation{
    \institution{Azure Research, Microsoft}
    \country{UK}
}
\email{heidi.howard@microsoft.com}

\author{Lucky E. Katahanas}
\orcid{0009-0008-3073-0844}
\affiliation{
    \institution{Unaffiliated}
    \country{USA}
}
\email{lkatahanas@gmail.com}

\author{Soujanya Ponnapalli}
\orcid{0009-0006-1449-1447}
\affiliation{
    \institution{University of California, Berkeley}
    \country{USA}
}
\email{soujanya@berkeley.edu}

\renewcommand{\shortauthors}{David C. Y. Chu et al.}

\begin{abstract}
\vspace{10pt}
\par Today, users can ``lift-and-shift'' unmodified applications into modern, VM-based Trusted Execution Environments (TEEs) in order to gain hardware-based security guarantees.
However, TEEs do not protect applications against disk rollback attacks, where persistent storage can be reverted to an earlier state after a crash; existing rollback resistance solutions either only support a subset of applications or require code modification.
% \souj{why is it fundamental is missing?}
Our key insight is that \emph{restoring disk consistency} after a rollback attack guarantees rollback resistance for \emph{any} application.
We present \textsc{Rollbaccine}, a device mapper that provides automatic rollback resistance for all applications by provably preserving disk consistency.
\textsc{Rollbaccine} intercepts and replicates writes to disk, restores lost state from backups during recovery, and minimizes overheads by taking advantage of the weak, multi-threaded semantics of disk operations.
% \changebars{Once the writes arrive at the replicas, \textsc{Rollbaccine} reconciles multi-threaded disk writes with single-threaded state machine replication by sending writes to disk in-order, while allowing non-conflicting writes to be concurrently in-flight.}{}
% \nc{Trying to save space/feels very detailed for an abstract}
% We introduce a formal specification of disk crash consistency and prove that it is preserved by  \textsc{Rollbaccine}. \david{Doesn't really flow well with this sentence, thought I could cut it out.}
% Across benchmarks over two real applications (PostgreSQL and HDFS) and two file systems (ext4 and xfs), \textsc{Rollbaccine} adds only $19\%$ overhead, except for the fsync-heavy Filebench Varmail.
% In addition, \textsc{Rollbaccine} outperforms the state-of-the-art, non-automatic rollback resistant solution by $208\times$.
\revisionbarsThree{\textsc{Rollbaccine} outperforms the state-of-the-art, non-automatic rollback resistant solution by $208\times$. In fact, }
{\textsc{Rollbaccine} performs on-par with state-of-the-art, non-automatic rollback resistant solutions; in fact, }
across benchmarks over PostgreSQL, HDFS, and two file systems (ext4 and xfs), \textsc{Rollbaccine} adds only $19\%$ overhead, except for the fsync-heavy Filebench Varmail.
% \heidi{Personally, I'd vote to reorder these last two sentences to really highlight the comparison to nimble. Maybe something like "Rollbaccine out performs the state-of-the-art.... In fact, across benchmarks over ...".}

% \souj{We could be more specific: Across 3 benchmarks over PostgreSQL and HDFS on two file systems (ext4 and xfs), \textsc{Rollbaccine} adds only $19\%$ overhead, except for xx\% for the fsync-heavy Filebench Varmail.}

\end{abstract}
\settopmatter{printfolios=true}
\maketitle
\pagestyle{plain}

\section{Introduction}
\label{sec:intro}

% Developers can lift-and-shift existing applications into VM-based Trusted Execution Environments (TEEs) today, gaining confidentiality and integrity guarantees with no code modification and minimal performance overhead~\cite{SNPPerfPost,SNPPerfPost2,cockroachCVM}.
Security-conscious developers lift-and-shift unmodified applications into VM-based Trusted Execution Environments (TEEs) under the impression that TEEs guarantee confidentiality and integrity with minimal performance overhead~\cite{SNPPerfPost,SNPPerfPost2,cockroachCVM}\footnote{It's important to highlight that TEEs \emph{only} protect confidentiality and integrity. An insecure application due to a code-level bug remains insecure.}.
This is true until the application needs to access disk; TEEs only protect data in memory, leaving the disk vulnerable.
% Unfortunately, this appealing narrative hides an ugly truth: TEEs only protect data in memory, leaving data on disk vulnerable.
% \nc{Do we want to directly say when nodes restart?}\heidi{I think not, today the data is vulnerable by default, its just fairly easy to handle whilst the cvm is online. Disk encryption is standard but not for integrity}
A combination of encryption, sealing, and hash verification can be used to provide confidentiality and integrity while the host is online, but once the host goes offline, the data on disk becomes vulnerable to \emph{rollback attacks}.

Rollback attacks revert disk to an earlier state, causing the system to execute over stale data.
Such attacks can be devastating: for example, an attacker can use rollback attacks in order to bypass limits on password attempts~\cite{androidRollbackKeys,iosRollbackKeys,signalRollbackKeys,juiceboxRollbackKeys} or reopen vulnerabilities in patched software~\cite{codeTransparencyService,windowsDowngradeAttack}. 
% To combat rollback attacks, existing TEE-based systems in production abandon the lift-and-shift philosophy in order to implement bespoke rollback protection mechanisms~\cite{svr3,ccf,gcpSpaces,azureConfidentialComputing,awsDmVerity}.

% Ideally, there would exist a solution against rollback attacks that is at once (1) \emph{general}, correct for all applications, (2) \emph{automatic}, requiring no application modification, and (3) \emph{resistant}, allowing the application to recover as if the rollback attack did not occur.
% Unfortunately, no existing solution provides all three guarantees.
% % ~\cite{rote,nimble,narrator,treaty,speicher,secureFS,engraft,svr3,vpfs,mlsdisk,dm-x,restartRollback,memoir,ariadne,ccf}
% Existing solutions either only detect but do not recover from rollbacks~\cite{treaty,speicher,secureFS,vpfs,mlsdisk,dm-x,memoir,ariadne}, are application specific~\cite{ccf,engraft,svr3,narrator,fastver}, or sacrifice automation by requiring the application to use a new API~\cite{nimble}.

% In this paper, we make one key observation: rollback attacks are fundamentally attacks on disks.
% As a result, generality can be achieved by restoring disk consistency after rollback, guaranteeing rollback resistance to \textit{any} application that uses disk regardless of application semantics.

To combat rollback attacks, production-level TEE-based systems implement \emph{bespoke} rollback-detecting or resistant solutions.
Signal's SVR3~\cite{svr3} modifies Raft~\cite{raft} while Azure's CCF~\cite{ccf} constructs a Merkle tree, whereas Google’s Confidential Space~\cite{gcpSpaces}, Azure’s Confidential Containers~\cite{azureConfidentialComputing}, and AWS’s Bottlerocket~\cite{awsDmVerity} verify code integrity at launch.
Academic efforts have further extended the catalog of supported applications~\cite{rote,nimble,narrator,treaty,speicher,secureFS,engraft,vpfs,mlsdisk,dm-x,restartRollback,memoir,ariadne}.

Lift-and-shift rollback resistance, however, remains elusive.
There is no solution that is at once (1) \emph{general}, correct for any application, (2) \emph{automatic}, requiring no application modification, and (3) \emph{rollback resistant}, allowing the application to recover as if the rollback attack did not occur.

We find in this paper that general, automatic rollback resistance is not only possible, but that \revisionbarsThree{it can also significantly outperform}{its performance is also competitive against} state-of-the-art, manual solutions.
Our key observation is that rollback attacks are fundamentally attacks on disk consistency.
General rollback resistance can therefore be achieved by restoring disk consistency after rollback, guaranteeing rollback resistance to \textit{any} application that uses disk regardless of application semantics.
% \souj{or file system} \david{I also think it's a bit too early to talk about file systems here (it's unclear why that's important at this point), we should just emphasize it once we discuss device mappers}
% \nc{I'm not sure the point is coming across yet. I know we don't want to talk about blocks just yet, but I worry that the jump from disk consistency to generality is non-obvious?}

The key challenge then lies in developing a strategy that preserves disk consistency at low cost.
Replication will necessarily be part of the solution: at least one machine must still have the data!
Na\"ively replicating all disk updates during execution, however, is a non-starter performance wise; this is why Nimble~\cite{nimble}, the state-of-the-art solution, requires the application to use a new API to indicate when replication is necessary.

A new API is unnecessary; this information is already available at the Linux block device layer~(\Cref{fig:linux-storage-stack}). Disk operations already include
% \souj{ when addressed at the block device layer? (the end result is nice but we could also emphasize our insight)}
% \nc{Broken transition with new text. I'd say: Disk operations already include persistence flags ...}
\textit{persistence flags} (\texttt{REQ\_FUA} and \texttt{REQ\_PREFLUSH})~\cite{fuaPreflush}, metadata attached to each write request indicating whether it should be synchronously written to disk or not.
Disk writes without these flags can return before persistence is guaranteed and potentially be lost after a crash.
These semantics are \emph{already} used by file systems in order to make most writes to disk fast and asynchronous by default, while a small number of operations are carefully persisted to ensure correctness.
% \souj{We build off these semantics (often under-specified and hidden within file system implementations) to replicate disk for rollback resistance} \david{Leaving out for now, seems like an unnecessary detail up front}
We can build off these semantics when replicating disk for rollback resistance; writes with persistence flags must be replicated on the critical path, while all other writes can be replicated in the background.

\begin{figure}[t]
    \centering
    \includegraphics[width=0.5\linewidth]{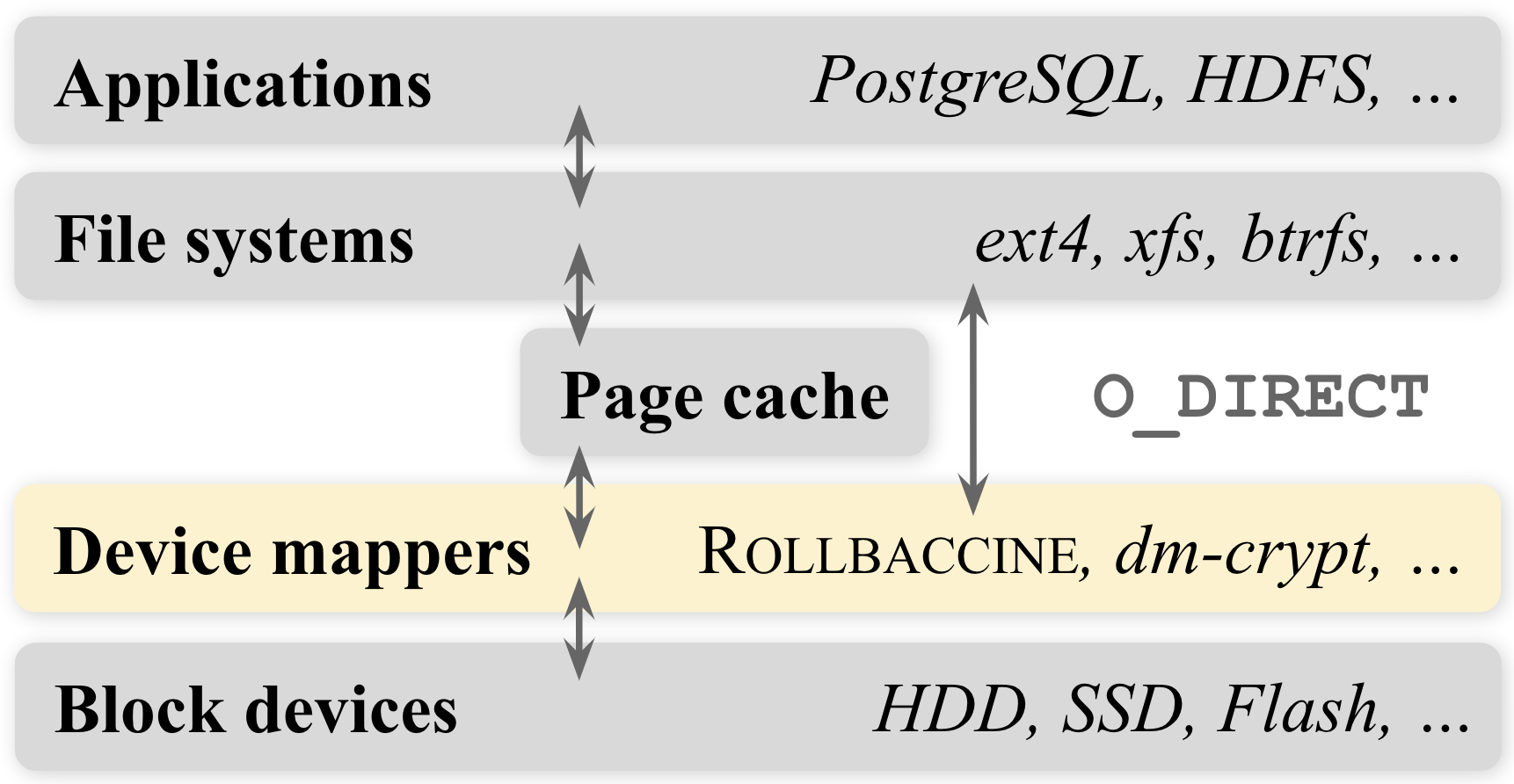}
    \caption{The Linux storage stack. Block I/Os tagged with \texttt{O\_DIRECT} bypass the page cache.
    % \souj{We should have \sys{} show up in this figure.}
    }
    \label{fig:linux-storage-stack}
\end{figure}

% Disk writes are notoriously slow, and applications and file systems \textit{already} operate under this assumption!
% As a consequence, file systems already make most writes to disk asynchronous by default, and instead very carefully choose a small number of operations to ensure persistence and overall correctness.
% These decisions are conveyed to the disk through \textit{persistence flags} (\texttt{REQ\_FUA} or \texttt{REQ\_PREFLUSH})~\cite{fuaPreflush}.
% These are the only writes that must be synchronously replicated to preserve disk consistency, and thus defend against rollbacks.

The weak semantics of disk also allow us to relax constraints on \emph{ordering}, further improving performance.
All existing countermeasures against rollback attacks enforce a total ordering of state changes in order to identify a ``canonical'' state that the system must recover to.
Disks are more flexible.
Upon crash and recovery, disks can recover any subset of weakly-persisted writes.
We can take advantage of this flexibility when replicating disk, allowing each disk to process writes in different orders and diverge, as long as they remain in a state consistent with prior operations.

We instantiate these ideas in \sys{} (the \textbf{rollba}ck va\textbf{ccine}), a system that intercepts and replicates writes to provide general and automatic rollback resistance with minimal overhead.
% \sys{} only replicates synchronous writes on the critical path, replicating asynchronous writes in the background.
% \sys{} identifies conflicting writes and enforces the order in which they are processed at each replica, but sends non-conflicting writes to disk in parallel.
% To prove that \sys{} restores disk consistency, we formally define the behavior of block devices (a category of storage devices that includes disk) in the presence of crashes (\Cref{sec:formalism}) and prove that block device consistency is preserved by \sys{} \tr{(\Cref{sec:correctness})}{in supplementary material}.
To prove that \sys{} restores disk consistency, we formally define the behavior of block devices (a category of storage devices that includes disk) in the presence of crashes and prove that block device consistency is preserved by \sys{}.
% \heidi{repetition between the above sentence and (2) in the contributions list, changebars show an possible cut down}

Importantly, we implement \sys{} as a device mapper \textit{below} the file system. This is key for providing generality: device mappers reason exclusively about block I/O requests and whether they should be written to disk synchronously or asynchronously. By preserving disk consistency at this level, \sys{} can defend against rollback attacks for any file system or application.

Our experimental results confirm that with \sys{}, general and automatic rollback resistance is possible with minimal performance penalty.
% On two large applications, PostgreSQL and HDFS, as well as two distinct file systems, ext4 and xfs, \sys{} introduces a maximum of $19\%$ throughput and latency overhead across TPC-C~\cite{tpc-c}, NNThroughputBenchmark~\cite{nnThroughputBenchmark}, and Filebench~\cite{filebench} Webserver, with more significant overheads ($71\%$ throughput and $2.7\times$ latency) only for the Filebench Varmail benchmark, with its high fsync frequency.
Across applications (PostgreSQL, HDFS, and the most common Linux file systems ext4 and xfs) and benchmarks (TPC-C~\cite{tpc-c}, NNThroughputBenchmark~\cite{nnThroughputBenchmark}, Filebench~\cite{filebench}) \sys{} introduces a maximum of $19\%$ throughput and latency overhead
\revisionbarsThree{, }{(on-par with Nimble~\cite{nimble}, a state-of-the-art, non-automatic rollback resistance solution)}
with more significant overheads ($71\%$ throughput and $2.7\times$ latency) only for Filebench Varmail, with its high fsync frequency.
% \souj{As discussed, I think the standard baseline should be encrypted local disk (no rollback protection at all) or a previous state-of-the-art for rollback protection?} \david{Keeping the baseline as the unreplicated disk for now, since we actually perform very well in comparison.}
% \nc{Highlight ext4 and xfs are the most common file systems?}
% In addition, \sys{} outperforms Nimble~\cite{nimble}---a state-of-the-art, non-automatic rollback resistance solution---by $208\times$ for write-heavy workloads.

In summary, we make the following contributions:
\begin{enumerate}
    \item We introduce \sys{}, a device mapper that offers applications rollback resistance (\Cref{sec:towards-rollbaccine}).
    \item We provide a formal definition of block device crash consistency (\Cref{sec:formalism}) and prove that it is preserved by \sys{} \tr{(\Cref{sec:correctness})}{in the supplementary material}.
    \item We show that \sys{} adds minimal overhead in most benchmarks and \revisionbarsThree{significantly outperforms}{is comparable to} state-of-the-art, non-automatic rollback-resistant solutions (\Cref{sec:evaluation}).
\end{enumerate}
\section{Motivation and Threat Model}
\label{sec:motivation}

TEEs provide confidentiality and integrity guarantees by
preventing, through hardware and software, unauthorized access to code or data.
Users can verify that their applications are executing within a TEE through remote attestation, where the host produces a proof of the code executing within a TEE~\cite{attestation}.Until recently, applications that wished to run within TEEs (e.g. Intel SGX) required extensive modifications and suffered significant performance penalties~\cite{scone}.

VM-based TEEs such as Intel TDX~\cite{tdx}, AMD SEV-SNP~\cite{SEV-SNP}, and Arm CCA~\cite{ARM-CCA} provide a new ``lift-and-shift'' abstraction, where applications can run \emph{unmodified} inside the TEE
% and gain confidentiality and integrity guarantees 
with minimal performance overhead.
TEEs have seen widespread adoption as a result: all major cloud providers support at least one type of VM-based TEE~\cite{azureConfidentialComputing,googleConfidentialComputing,awsConfidentialComputing}, and they are used in industry for private data processing~\cite{gcpSpaces,ccr,edgeless-mpc}, key management~\cite{signalRollbackKeys}, supply-chain security~\cite{codeTransparencyService}, and AI inference~\cite{azureConfidentialAI,nvidiaEdgeless,federatedLearning}.

\subsection{The Dangers of Rollbacks}
\label{sec:dangers-of-rollbacks}

Unfortunately, the confidentiality and integrity guarantees do not currently extend to persistent state.
% Attackers can observe and modify application data on disk by either directly accessing disk (using other applications) or by intercepting disk operations with privileged code (such as a kernel module loaded into the host OS).\heidi{Previous sentence could be cut in my opinion}
Existing encryption and integrity-preserving techniques~\cite{dmCrypt,dmIntegrity,bitlocker} can be used to automatically provide disk confidentiality and some degree of integrity.
However, because the metadata used to verify integrity is still stored on disk, they remain vulnerable to \emph{rollback attacks}, where an adversary could modify data (and its on-disk integrity metadata, in tandem) in order to present the application with a stale disk.

\begin{definition}[Rollback attack]
\label{def:rollback-attack}
Modifies disk reads such that they only reflect a prefix of prior operations.
\end{definition}

Online rollback attacks---performed while the application is executing---can be detected by an application that validates reads against integrity metadata in memory.
An attacker can instead launch an \emph{offline} rollback attack, crashing the host (and thereby clearing any metadata in memory) before rolling disk back.
Offline rollback attacks are insidious because they are \emph{undetectable}; the recovering application cannot distinguish an offline rollback attack from a benign crash, and will execute obliviously on stale state.

Consider for example a TEE application that rate-limits password-guessing attempts (\Cref{lst:password-guessing}).
It maintains a counter on disk to prevent excessive retries, even across reboots.
An attacker could repeatedly crash the TEE and rollback the disk to a state before the counter was incremented, effectively bypassing the guessing limit.%\souj{ with a rollback attack}. \david{Seems redundant, we already said "rollback the disk"}
% \souj{Clarify that for the purpose of this discussion the code is run inside TEEs}

\begin{figure}
\begin{lstlisting}[language=Python,
       keywordstyle=\color{blue}\ttfamily,
       stringstyle=\color{red}\ttfamily,
       commentstyle=\color{green}\ttfamily,
       xleftmargin=0.5\linewidth,
      ]
with open(counter_file, "r+") as file:
    counter = int(file.readline()) + 1
    if counter > 10: # Too many attempts
       return False
    else: 
        file.write(f"{counter}")
        file.flush()
        os.fsync(file.fileno())
        return pin == real_pin
\end{lstlisting}
    \caption{Password-guessing application}
    \label{lst:password-guessing}
\end{figure}

% \david{Comments that I removed from the program above for space:
% # Read the counter
% # Check whether exceeding max retries?
% # Write the counter}

% There are two main ways to launch a rollback attack.
% The first is by crashing the system, rolling the disk back, then rebooting the machine and pretending the crash was benign.
% \heidi{add sentence here to say that file system level intergity or dm-integrity does not prevent this as they store disks on disk.}
% Future application reads will fail to see updates that had been persisted to disk.
% The second is by intercepting disk operations, such as by loading a kernel module into the host OS and modifying the data returned on reads; although the disk is not physically modified, the disk state observed by the application is no longer consistent.
% \heidi{do we want to say something about rollbacking back whilst the host is online and why this isn't the focus of this paper}

Rollback attacks are relevant to \emph{all} applications that rely on persistent storage, including applications that do not interface with disk directly and instead rely on local (or even distributed!) database systems for persistence.
Those databases in turn rely on the persistence of \emph{their} local disks, which can be violated by rollback attacks.

% \changebars{To combat rollback attacks, production-level TEE-based systems all implement bespoke rollback-detecting or resistant solutions.
% SVR3~\cite{svr3}, Signal’s key recovery system, modifies Raft~\cite{raft} to prevent rollback attacks.
% CCF~\cite{ccf}, Azure’s TEE-protected ledger, constructs a Merkle tree to prevent rollbacks.
% Google’s Confidential Space~\cite{gcpSpaces}, Azure’s Confidential Containers~\cite{azureConfidentialComputing}, and AWS’s Bottlerocket~\cite{awsDmVerity} all verify the integrity of disk to detect rollbacks.}
% {}
% \david{Moved to intro.}\heidi{I preferred it here but only a very slight preference}
% \nc{What is a read-only partition?}
% \heidi{Not quite, I can only speak for ACI but whilst they do use dm-crypt and integrity (I need to double check that!) they don't yet protect againt rollback attacks. I'm hoping they will use rollbaccine for that :)}

\subsection{Threat Model and Guarantees}
\label{sec:threats}

\par \textbf{Threat model.} 
% We make the standard assumptions, common to all TEE-based systems~\cite{nimble,ccf,restartRollback,narrator,serverlessCC,cyclosa}, that clients trust the hardware manufacturer, and that TEEs are as safe as they claim to be.
We make the standard assumptions that clients trust the hardware manufacturer, and that TEEs are as safe as they claim to be~\cite{nimble,ccf,restartRollback,narrator,serverlessCC,cyclosa}.
Specifically, attackers cannot violate the integrity and confidentiality of memory, break standard cryptographic primitives, or exploit physical hardware or side-channel attacks~\cite{Gruss17, Oleksenko18, Van20, Van18, SGAxe, Borrello22, Murdock19, Skarlatos19, Schwarz17, VanSchaik22, asyncShock}.
\revisionbarsThree{}{Applications executed within a TEE will not deviate from their code.}
% \nc{Why not say: Applications executed within a TEE will always correctly execute code.} \david{I feel like ``correctly execute code'' might imply liveness (must execute all of its code) in addition to safety (won't execute random code), which isn't guaranteed.}
% Attackers, including malicious cloud providers, can still crash machines and corrupt network and disk I/O.
Attackers can still crash machines and corrupt network and disk I/O.

As in any system, the number of machines that an attacker can compromise directly impacts the correctness guarantees that the system can make; we therefore classify attackers as either \textbf{Type I (Individual)} or \textbf{Type II (Total)}.
Type I attackers represent malicious employees that can compromise up to $f$ machines, while Type II attackers represent malicious cloud providers with the ability to compromise all machines.
% \nc{I wonder if we can emphasise this is inherent to all systems: "As in any system, the number of machines that an attacker can compromise directly impact ...}
% \nc{I think we may want to give a little bit more context on why we're bothering to distinguish between the two? Right now it's a bit sudden. I think it'll feel stronger if we include some motivation}

\par \textbf{Correctness guarantees.}
Existing solutions provide one of two guarantees in the presence of rollbacks: rollback detection and rollback resistance.

\begin{definition}[Rollback detection]
\label{def:rollback-detecting}
An application is rollback-detecting if it detects rollback attacks and halts.
\end{definition}
% \nc{if it detects a rollback attack and halts? Halts sounds a bit negative}

Rollback detection guarantees \emph{safety}.
Following a rollback attack, a rollback-detecting application may be incapable of recovering, but it will never execute over stale data.
% In other words, rollback detection transforms rollback attacks into denial-of-service attacks.

\begin{definition}[Rollback resistance]
\label{def:rollback-resistant}
An application is rollback resistant if, following a rollback attack, it always recovers to a state it could have recovered to in the absence of rollback attacks.
\end{definition}

Rollback resistance guarantees \emph{liveness} in addition to safety.
Following a rollback attack, a rollback-resistant application will recover lost data and continue execution.

\sys{} guarantees rollback resistance in the presence of Type I attackers and rollback detection for Type II attackers.

\subsection{Limitations}
\label{sec:limitations}

% Rollback resistance guarantees that the effect of a rollback is invisible to the application.
% It does \emph{not}, however, guarantee correctness.
% % that the application recovers to a ``correct'' state or is secure against other forms of attacks.
% % Any vulnerability (aside from memory and disk confidentiality and integrity) that existed in the application remains present in the same rollback-resistant, TEE-protected application.
% % \natacha{Let's try to ask people to check they *get* this paragraph. Just to make sure we don't end up in the same situatin again}

% Consider again the password guessing application.
% If the application did not include the \texttt{os.fsync} line, an attacker could intelligently crash the application before the counter makes it to disk, thus bypassing the password guessing limit without ever modifying disk.
% Because the attacker did not rollback the disk, this is not a rollback attack and remains possible even if the application were rollback resistant.
% Similarly, rollback resistance's guarantees are limited to disk operations; it does not provide general safety to arbitrary applications, e.g. if the password counter relies on accurate system time from the untrusted host, or uses an insecure entropy source for password generation.

Type II attackers can always violate liveness (and rollback resistance) by simply denying service or shutting down all machines; this is a fundamental limitation of all rollback solutions ~\cite{nimble,secureKeeper,lcm,cyclosa}.
In that case, \sys{} falls back to rollback detection, ensuring that the application does not execute over tampered data.
For clients wary of cloud providers, a multi-cloud deployment would effectively convert all Type II attackers to Type I.

Importantly, rollback resistance should not be equated with application correctness.
An application with preexisting vulnerabilities, placed in a TEE and given rollback resistance, does not become secure.
Probabilistically unlikely failure modes in an application (or the underlying file system) may be exploited through non-rollback attacks.
% \heidi{In response to reviewer A from SOSP it might be worth adding a sentence here to say that we are also assuming that the file system is implemented correctly, ensuring this is an orthogonal problem addressed elsewhere (citing some papers of fs verification)}

Consider again the password-guessing application with the \texttt{os.fsync} line removed.
It is probabilistically correct in a benign setting where crashes are rare; however, an attacker could repeatedly crash the TEE before the counter makes it to disk in order to bypass the password guessing limit.
This is \emph{not} a rollback attack---the attacker crashed the TEE but did not modify the disk---and remains possible even with rollback resistance.
Rollback resistance simply removes the effect of rollback attacks.

\section{Towards \sys{}}
\label{sec:towards-rollbaccine}

The ideal solution for protecting against rollback attacks is general, automatic, and resistant, allowing developers to place unmodified applications in TEEs without worrying about rollback attacks. 

Generality---the ability to protect arbitrary applications against rollback attacks---is the hardest to achieve, because we do not know what data each application relies on for recovery.
To achieve generality, solutions like Nimble~\cite{nimble} sacrifice automation, requiring significant manual effort to identify the application-specific state that must be protected.

Our key insight is that nullifying the effect of rollback attacks on \emph{disk} is sufficient to guarantee general rollback resistance, regardless of individual applications' semantics.

By definition, regardless of how they are mounted, \textit{rollback attacks only modify disk}.
Thus, if we restore the disk to a state it could have recovered to after a benign crash, then to any application, the attack is indistinguishable from a benign crash.
The application must recover as if the rollback attack did not occur, granting it rollback resistance by definition (\ref{def:rollback-resistant}).

% Protecting disk state is challenging, as applications interact with the disk in complex ways.
% The choice of file system, for instance, influences the guarantees that an application can expect after a crash.
% File systems implement POSIX differently, and file system consistency is the subject of active research~\cite{ace,crashmonkey,alice,crashMonkeyAndACE,optimisticCrashConsistency,vijayThesis,daisyNFS,hydra,chipmunk,yggdrasil,ferrite,monarch}.

\sys{} leverages this insight to implement general, automatic rollback resistance by replicating disk.
Replication is a well-known strategy for recovering data lost in a rollback attack~\cite{nimble,narrator,rote,engraft}.
% \souj{Umm, do we want to ack that replication is a well-known strategy to undo rollbacks? First time we mention replication, I think?}
Replicating disk, however, requires addressing three main challenges:
(1) understanding exactly \emph{what} states the disk can recover to after a benign crash,
(2) intercepting and replicating writes to disk so they are available after a rollback attack, and
(3) limiting the overheads of doing so.
In the rest of this paper, we address each challenge in turn:

\par \textit{1. Formalisms (\Cref{sec:formalism}). } 
In order to restore disk to a state it could have recovered to after a benign crash, we need a formal understanding of exactly \emph{what} states it could have possibly been in.
% To the best of our knowledge, although \textit{file system} crash consistency~\cite{ace,crashmonkey,alice,crashMonkeyAndACE,optimisticCrashConsistency,daisyNFS,hydra,chipmunk,yggdrasil,ferrite,monarch,vijayThesis}
% has been extensively studied, there is no formal definition for the block-level semantics of disks.
Although file system crash consistency has been extensively studied~\cite{ace,crashmonkey,alice,crashMonkeyAndACE,optimisticCrashConsistency,daisyNFS,hydra,chipmunk,yggdrasil,ferrite,monarch,vijayThesis}, the field is primarily concerned with avoiding inconsistent states or detecting violations through testing.
To the best of our knowledge, no existing work formalizes the \emph{correct} behaviors for the disk that these file systems rely on.
%\souj{Although file system crash consistency has been extensively studied~\cite{ace,crashmonkey,alice,crashMonkeyAndACE,optimisticCrashConsistency,daisyNFS,hydra,chipmunk,yggdrasil,ferrite,monarch,vijayThesis}, the field is primarily concerned with avoiding inconsistent disk states by construction or detecting specific violations through testing. To the best of our knowledge, no existing work formalizes the set of block-level states that constitute correct disk executions—that is, \emph{all} the states a disk could recover to after a benign crash} \david{Reads longer but muddier to me. What's the detail you want to add?}
% \souj{The formal definition of block semantics is often hidden in the implementation of such file system operations.} \souj{Good place to discuss some "key" differences between fs crash consistency that changes with fs design vs block device crash consistency, which applies for all disks}

\begin{table}[t]
    \setlength\extrarowheight{-10pt}
    \footnotesize{
    \begin{tabular}{@{}ll}
        f = open("test.txt") & \texttt{WRITE(8)} \\
        & \texttt{READ(33928)} \\
        & \texttt{READ(1096)} \\
        write(f, "hello", 6) & \texttt{WRITE(1048664)} \\
        & \texttt{WRITE(1048672)} \\
        fsync(f) & \texttt{WRITE(1048680, FUA | PREFLUSH)} \\
        & \texttt{WRITE(1048680, FUA)} \\
        & \texttt{WRITE(266240)}
    \end{tabular}
    }
    \caption{File operations in ext4 and their corresponding block I/Os, sector numbers, and persistence flags.}
    \label{table:file-sys-calls-to-block-ios}
\end{table}

\par \textit{2. Intercepting (\Cref{sec:design}).  }
Next, we need a well defined interface for intercepting disk operations that is relatively simple to review, maintain, optimize, and trust, as opposed to a custom rollback-resistant application~\cite{nimble} or file system.
To this end, we implement \sys{} as a \textit{device-mapper}.

A Linux device mapper is a kernel module that lies between block devices (such as disks) and higher level applications, as seen in \Cref{fig:linux-storage-stack}.
Each disk read or write request arrives at the device mapper as a block I/O consisting of the disk sectors involved, pages containing data for writes or retrieving data for reads, and additional flags (\texttt{REQ\_FUA} and \texttt{REQ\_PREFLUSH}) describing whether the data should be written synchronously to disks or not.
\Cref{table:file-sys-calls-to-block-ios} describes a simple application writing to the \texttt{test.txt} file on the ext4 file system and the resulting block I/Os. 

Implementing \sys{} as a device mapper presents two benefits.
First, it is a commonly accepted strategy in industry to augment disk functionality.
Dm-crypt~\cite{dmCrypt}, dm-verity~\cite{dmVerity}, and dm-integrity~\cite{dmIntegrity} are all popular device mappers for enabling disk encryption and (limited) integrity without application modification~\cite{awsDmVerity,azureDmCrypt,azureDmVerity,azureStorage,androidDmCrypt,androidDmVerity}.
Second, device mappers sit below the file system (\Cref{fig:linux-storage-stack}) and are thus file-system agnostic; this allows us to evaluate against both ext4 and xfs in \Cref{sec:performance-overview} without code modification.
% Third, device mappers have a well defined interface that is relatively simple to review, maintain, optimize, and trust, as opposed to a custom rollback-resistant port of an existing application~\cite{nimble} or a custom rollback-resistant file system.

\par \textit{3. Overheads (\Cref{sec:design}).  }
Finally, we must minimize the overhead of replication.
Synchronous disk replication on the critical path not only introduces high overhead (\Cref{sec:nimble}), but is also often \emph{unnecessary}.
Existing applications and file systems already carefully engineer their implementation to reduce the number of synchronous writes; these writes are mapped to block I/Os with persistence flags
% (see the \texttt{fsync(f)} in \Cref{table:file-sys-calls-to-block-ios}) 
and exposed to the device mapper.
It suffices for \sys{} to synchronously replicate writes with persistence flags and replicate remaining writes in the background.

% \changebars{}{
% We provide a formal definition of block device semantics below and demonstrate that \sys{} preserves these semantics in \tr{\Cref{sec:correctness}}{supplementary material}.
% }
% \david{Commenting out for space}

% \souj{independent of a complex multi-threaded application or its choice of fs. And we prove this with the formal definition below?} \david{Wrong section to put it, since this is about overheads and when to sync/async replicate.}\souj{I guess the question is -- \textit{why} do we need to formalize block device semantics? it is the generality argument right -- so reiterating that here could help?}

% \souj{With block device crash consistency, we define \textit{all} safe behaviors/executions instead of characterizing a few unsafe behaviors like prior efforts\{cite: fs crash consistency and verification papers.\}; or we improve upon prior work to} \david{See edits to 1. Formalisms. I feel like our summary of the field right now misrepresents verification papers though, lmk how you'd like to fix}\souj{Done.}
\section{Block Device Crash Consistency}
\label{sec:formalism}

% \souj{To keep terminology consistent, shouldn't it be "safe state" not benign state?}

\sys{} provides rollback resistance by guaranteeing that, following a rollback attack, the disk always recovers to a 
% benign state.
% A formal definition of benign state is thus necessary.
% Concretely, we must formalize block device consistency in the presence of crashes: what are the states to which a block device (i.e. a disk) can recover to after a crash?
% These are the same states that \sys{} must recover to post-rollback.
safe state, one that it could have recovered to after a benign crash.
A formal definition of safe state, independent of file system semantics, is thus necessary.
Concretely, we formalize block device consistency as the states to which a block device (i.e. a disk) can recover to after a crash.
% Suggestion for rephrasing previous paragraph
% \souj{The disk states that a system can safely recover to post an adversary-crafted rollback are the same states that a disk recovers to after a benign crash. A formal definition of a benign state is thus necessary, however, independent of file system semantics. Concretely, we formalize block device crash consistency: the states to which a block device (i.e. a disk) can recover to after a crash.} \david{Adopted}

% \natacha{Unfortunately, to the best of our knowledge, there is no prior work that precisely describes block device crash consistency semantics. 
% The documentation of their behavior is instead dispersed throughout Linux mailing lists~\cite{dmIntegrityMailingList} and the assumptions made by various file system consistency papers~\cite{crashmonkey,crashMonkeyAndACE}.
% This is unlike file system crash consistency, which is an active area of research~\cite{vijayThesis}.
% We therefore begin by formalizing block device crash consistency.}
% \david{I already included this paragraph in the previous section under formalisms; I vote we keep it there (since it's more related-work flavored) and avoid repetition.}\natacha{yup makes sense}

% Crash
% full system crash (all processes/threads) (all volatile state lost)
% We start by defining the behavior of block devices (general storage systems that includes disks) in the event of a crash ($C$), where the entire system---all threads and their volatile state---is reset.
We begin with the assumption that disk read and write \textit{operations} ($O$) are atomic, which is consistent with prior work and the inherent properties of disks~\cite{brown2002oracle, sivathanu2003semantically, pillai2014all}.
We also assume that writes to block devices are guaranteed to persist across crashes only when the persistence flags \texttt{REQ\_FUA} or \texttt{REQ\_PREFLUSH} are used~\cite{crashmonkey,crashMonkeyAndACE,vijayThesis,fuaPreflush}.
% \david{Changed wording here (used to be "only when they are tagged with the persistence flags) because REQ_PREFLUSH gives guarantees to non-tagged writes as well.}
\texttt{REQ\_FUA} guarantees that when the write is completed, it must be persisted, whereas \texttt{REQ\_PREFLUSH} guarantees that any previously completed operation is persisted.

We describe the execution of a block device with a \textit{history} $\mathcal{H}$ as a totally ordered sequence of events $V$ composed of \textit{invocations}, \textit{responses}, and \textit{crashes}.
This total ordering allows us to capture causal relationships between reads and writes and is distinct from the (unknown) order in which the disk actually processes operations.
We denote $\mathcal{H}[t]$ as the sequence of events performed by a thread $t$.
% \nc{Can we justify that the history is totally ordered?}

% Invocations and responses are associated with read or write \textit{operations} $O$. 
Concretely, invocations represent \texttt{submit\_bio} function calls sent to the block device; responses represent the corresponding call to \texttt{bi\_end\_io} by the block device, signaling I/O completion. 

Read requests to block $b$ by thread $t$ are written $R_{inv,t}(b)$; 
responses are $R_{res,t}(b,val)$, where $val$ is the value returned.
Write requests are $W_{inv,t}(b,val,sync)$, where the value to write (if any) is \textit{val} and \textit{sync} is a tag with one of the following values: \texttt{REQ\_FUA}, \texttt{REQ\_PREFLUSH}, \texttt{REQ\_FUA|REQ\_PREFLUSH}, or $\varnothing$.
$W_{res,t}(b)$ is the matching response.
% \nc{nit: does bi\_end\_io not include an error code?} \david{Yes, but in our formalism we assume that all operations succeed. Do you think we should include it or say we ignore it?} \nc{Prob ok to ignore}
We assume that blocks are always written to before they are read from.
% \souj{I like that persistence flags are explained in line -- maybe just adding a line saying they are explained in line could help?}

To define block device crash consistency, we take as our starting point the definitions of Izraelevitz et al \cite{durableLinearizability}.
% Our definitions differ due to semantic differences between the persistence operations of NVM (\texttt{pwb}, \texttt{pfence}, \texttt{psync}) and block devices (\texttt{REQ\_FUA} and \texttt{REQ\_PREFLUSH}).
% This formalization requires reasoning about the observable state on the block device in the presence of crashes and concurrent, multi-threaded executions.
We will build up to a definition of linearizability before extending it to crashes.
We start by defining a sequential history.
% \changebars{We then make explicit which pieces of state must be durable after a crash with durable cuts and define block device crash consistency.}{} \nc{Not sure that paragraph was particularly helpful for me. In the interest of needing space, cut?}

In a sequential history, responses always follow invocations.
There can be at most one \textit{pending} invocation at a time (invocation without a matching response) and a crash cannot occur between an invocation and its response.

\begin{definition}[Sequential history]
\label{def:sequential}
A history $\mathcal{H}$ is sequential if for each $O_{inv}$ and its matching response $O_{res}$ in $\mathcal{H}$, $\exists \mathcal{H}_1, \mathcal{H}_2$ such that $\mathcal{H} = \mathcal{H}_1O_{inv}O_{res}\mathcal{H}_2$.
\end{definition}

This allows us to reason about multi-threaded histories by comparing each thread's execution to a sequential history.\footnote{For programs that issue concurrent operations per thread using async I/O, we can map each physical thread to multiple abstract threads.}
When multiple threads operate over the same block, we use the \textit{happens-before} relationship to order writes and reads on different threads, as this is necessary to determine whether a history satisfies reads-see-writes.

\begin{definition}[Happens-before]
\label{def:happens-before}
An event $V_1$ happens-before event $V_2$ in a history $\mathcal{H}$ (denoted $V_1 \prec V_2$) if $V_1$ precedes $V_2$ and either \\
(1) $V_1 = O_{res}(b)$ and $V_2 = O'_{inv}(b)$ over the same block $b$, \\
(2) $V_1$ or $V_2$ is a crash $C$, \\
(3) $V_1 = O_{res}(b)$ and $V_2 = W_{inv}(b',val,sync)$ where $sync$ contains \texttt{REQ\_PREFLUSH}, or \\
(4) $\exists V'$ such that $V_1 \prec V' \prec V_2$.
\end{definition}

Criteria 1 and 4 are standard~\cite{linearizability}; 
% Criterion 2 was introduced for crash-consistent NVMs~\cite{durableLinearizability} and also applies to crash-consistent block devices.
% It states that crashes are global events; all events either happen-before or after a crash.
Criterion 2 states that crashes are global events; all events either happen-before or after a crash~\cite{durableLinearizability}.
Criterion 3 is new and captures the global semantics of \texttt{REQ\_PREFLUSH}: once a \texttt{REQ\_PREFLUSH} is invoked, 
% it is only returned by disk once all previous writes from every thread are flushed and persisted, regardless of which blocks they wrote to.
the operation only completes when all previous writes from every thread are flushed and persisted.
% Applying this criteria to \Cref{fig:formalism}, $W_{res}(1) \prec W_{inv}(0,\texttt{REQ\_PREFLUSH})$, but $W_{inv}(2) \not\prec W_{inv}(0,\texttt{REQ\_PREFLUSH})$.
% Once the \texttt{REQ\_PREFLUSH} completes, we are guaranteed that $W(1)$ is persisted to disk, but have no guarantees on $W(2)$.
% Note that criteria 3 does not apply to \texttt{REQ\_FUA}, which only guarantees that the tagged write itself persists to disk~\cite{fuaPreflush}. 
% This will become relevant once we discuss the state a block device recovers to after a crash. \nc{This last sentence can optionally be cut if we need space}

The happens-before relationship allows us to define what each read returns.
Each read of block $b$ must return the value of the latest completed write to that same block $b$, as long as there are no crashes in-between (during which writes may be lost). 
We formalize this as the \textit{reads-see-writes} property, which is only defined for crash-free periods of history (we call these \textit{era}s $\mathcal{E}$).
% \nc{See conversation with David about potentially simplifying notation}.

\begin{definition}[Reads-see-writes]
\label{def:reads-see-writes}
A history $\mathcal{H}$ respects reads-see-writes if $\forall R_{res}(b,val) \in \mathcal{H}$, there is a preceding write invocation $W_{inv}(b,val,sync)$ with that same $val$ such that $\mathcal{H} = \mathcal{H}_0W_{inv}\mathcal{E}_0W_{res}\mathcal{E}_1R_{inv}\mathcal{E}_2R_{res}\mathcal{H}_1$, and there does not exist another $W_{inv}(b)$ in the eras $\mathcal{E}_0, \mathcal{E}_1, \mathcal{E}_2$.
\end{definition}

%For now, we can model multi-threaded histories as concurrency-free, sequential histories using \textit{linearizability}.

Finally, we  consider \textit{pending} invocations: invocations without a matching response.
Pending writes in particular require care as they may (or may not) have been processed by the underlying block device and reflected in the next read.
We write \textit{compl}$(\mathcal{H})$ to be the set of histories generated from $\mathcal{H}$ by inserting matching responses after some pending invocations.
This models situations where pending operations \textit{have} been persisted to disk.
In contrast, let \textit{trunc}$(\mathcal{H})$ be the history generated from $\mathcal{H}$ by removing all pending invocations.
This reflects histories where the operation was \textit{not} persisted.

% \changebars{Intuitively, a well-formed history is \textit{linearizable} if it is equivalent to a sequential history that respects happens-before relationships across threads.
% Formally, a well-formed history $\mathcal{H}$ and sequential history $\mathcal{S}$ are \textit{equivalent} if $\forall t$, $\mathcal{H}[t] = \mathcal{S}[t]$.
% }{}

% \changebars{Reasoning about invocations without matching responses requires a bit more care: either the invocation has been processed by the underlying block device, or it has not.
% Let \textit{compl}$(\mathcal{H})$ be the set of well-formed histories generated from $\mathcal{H}$ by inserting matching responses after some pending invocations.
% This models the fact that pending operations may have been persisted to disk.
% Let \textit{trunc}$(\mathcal{H})$ be the history generated from $\mathcal{H}$ by removing all pending invocations.
% }{}

We can now define \textit{linearizable history} as follows.

\begin{definition}[Linearizable history]
\label{def:linearizable-history}
A history $\mathcal{H}$ is linearizable if there exists a history $\mathcal{H}' \in \textit{trunc}(\textit{compl}(\mathcal{H}))$ and a sequential history $\mathcal{S}$ such that: \\
(1) $\mathcal{S}$ respects reads-see-writes \\
(2) $\forall t$, $\mathcal{H}'[t] = \mathcal{S}[t]$ (i.e. $\mathcal{H}'$ and $\mathcal{S}$ are equivalent)\\
% $\mathcal{H}'$ and $\mathcal{S}$ are equivalent
(3) $V_1 \prec V_2$ in $\mathcal{H}'$ implies $V_1 \prec V_2$ in $\mathcal{S}$.
\end{definition}

% In \Cref{fig:formalism}, $\mathcal{H}$ is linearizable if $R_{res}(1)$ the value written by $W_{inv}(1)$, respecting reads-see-writes.
% Then, $\mathcal{H}$ would be equivalent to the following sequential history: $W_{inv}(1)W_{res}(1)W_{inv}(0)W_{res}(0)R_{inv}(1)R_{res}(1)$.

% Since $W_{inv}(2)$ is a pending invocation, we can either remove it from the history with \textit{trunc} or append a matching response after it with \textit{compl}.
% \changebars{If the read response for thread 1 returns any other value, however, no equivalent sequential history would exist for $\mathcal{H}$ that also respects reads-see-writes, so $\mathcal{H}$ would not be linearizable.}{} \nc{I'm not sure that the second half was a particularly useful, I think it'd be useful if there was another example that looked all good until you "tried" to construct a sequential history with it}
% \heidi{readabiliy of this would be helped with an example, either as a figure or in the text where (a) is a linearizable history and (b) is not a linearizable history}
% \nc{Definitely agree with Heidi. These definitions need to be seen as having value to a systems audience. I'd recommend trying to see what we did in Basil, we tried really hard to provide intuition for them.}

In the absence of crashes, this definition of linearizability is sufficient to model the behavior of multi-threaded operations over a block device.
With crashes on the other hand, some but not all writes may be recoverable from the block device.
We formalize the set of possible write values that may be recoverable with the notion \textit{durable cut}.

\begin{definition}[Durable cut]
\label{def:durable-cut}
A durable cut $\mathcal{D}$ of history $\mathcal{H}$ is a subhistory of some $\mathcal{H}' \in \textit{trunc}(\textit{compl}(\mathcal{H}))$ where \\
(1) if $\mathcal{H}'$ contains $W_{inv}(b,val,sync)$ and its matching response $W_{res}(b)$ where $sync$ contains \texttt{REQ\_FUA} or \texttt{REQ\_PREFLUSH}, then $\mathcal{D}$ must contain $W_{res}(b)$, \\
(2) $\forall V \in \mathcal{D}$, $\mathcal{D}$ also contains any $V'$ where $V' \prec V$ in $\mathcal{H}'$, and \\
(3) $\mathcal{D}$ has no pending invocations.
\end{definition}

The durable cut is a \emph{cut} of history that contains (1) all writes tagged with persistence flags and (2) any writes that happen-before a write already in the cut.
This is where the extra criteria for \texttt{REQ\_PREFLUSH} in the happens-before relationship becomes relevant; if a \texttt{REQ\_PREFLUSH} completed, then it must be in the durable cut, and any operations that happened-before it must be in the cut as well.
% The \texttt{REQ\_FUA} flag instead simply guarantees that, if an operation completes with that flag, it will necessarily be in the cut.

% For example, if $\mathcal{H}$ in \Cref{fig:formalism} is followed by a crash, then the durable cut must contain $W_{inv}(0)$ and its matching response, since it is a \texttt{REQ\_PREFLUSH}, and $W_{inv}(1)$ and its matching response, since it happens-before $W_{inv}(0)$.
% The durable cut need not contain $W_{inv}(2)$ or $R_{inv}(1)$ and its matching response because they do not happen-before $W_{inv}(0)$.

Finally, we can formalize what persisted state can be read from disk after a crash with \textit{block device crash consistency}.
Effectively, for each crash-free era $\mathcal{E}$, the set of writes that ``made it to disk'' before the crash forms the durable cut $\mathcal{D}$.
% , where any completed writes tagged with persistence flags must have ``made it to disk''. \david{Again, removed because PREFLUSH provides guarantees for non-tagged writes as well}

\begin{definition}[Block device crash consistency]
\label{def:block-device-crash-consistency}
A history $\mathcal{H} = \mathcal{E}_0C_0\mathcal{E}_1C_1\ldots\mathcal{E}_{x-1}C_{x-1}\mathcal{E}_x$ is block device crash consistent if there exists a single $\mathcal{D} = \mathcal{D}_0\mathcal{D}_1\ldots\mathcal{D}_{x-1}$ such that $\forall i$, $\mathcal{D}_i$ is a durable cut of each era $\mathcal{E}_i$, and $\mathcal{D}_0\mathcal{D}_1\ldots\mathcal{D}_{i-1}\mathcal{E}_i$ is linearizable.
\end{definition}
% \heidi{again I think this definition would benefit from two examples }
% \nc{Maybe also create an init event that initialises the history to $\bot$? Does H need to be linearizable, or just DoD1D2..} \david{Just D0D1D2. H can violate reads-see-writes because writes can be lost after a crash}

Block device crash consistency checks the following.
Is era $\mathcal{E}_0$ linearizable?
Then, moving on to $\mathcal{E}_1$, is there some durable cut $\mathcal{D}_0$ of $\mathcal{E}_0$ (representing the writes that had actually made it to disk) such that $\mathcal{D}_0\mathcal{E}_1$ is linearizable?
It builds inductively, keeping the data that made it to disk consistent for each era.
If the above holds for \emph{all} eras in $\mathcal{H}$, then $\mathcal{H}$ is block device crash consistent.
% \nc{Had a hard time following this paragraph, a concrete example would really help. It's non standard to have to build an equivalence definition iteratively, which is I think why i'm struggling}
% \nc{Still hard to follow the inductive approach. Fine for now, but we probably want a slightly better way to present it later}

% In \Cref{fig:formalism}, $\mathcal{H}$ is a linearizable era, satisfying the condition for $\mathcal{E}_0$.
% If it is followed by a crash, then it remains block device crash consistent if $\mathcal{E}_1$, the era after the crash, satisfies the following:
% any $R_{res}(1)$ reflect the value written by $W_{inv}(1)$, since $W_{inv}(1)$ and $W_{res}(1)$ must be in the durable cut, and
% $R_{res}(2)$ do or do not see $W_{inv}(2)$.
% If $R_{res}(2)$ \emph{does} see $W_{inv}(2)$, then the inductive property of block device crash consistency states that even after \emph{another} crash, the next read will \emph{still} see $W_{inv}(2)$, since the same durable cut that contains $W_{inv}(2)$ is reused for each inductive step.

% \souj{Comes at an odd place; merge it with the standard assumption above?}
Assuming only benign system crashes and no random disk corruption, block device crash consistency precisely captures the set of histories produced by a disk~\cite{dmIntegrityMailingList}.
% We prove that all histories produced by \sys{} are block device crash consistent in \tr{\Cref{sec:correctness}}{the supplementary material}.
% In this way, \sys{} ensures that the system always remains in a benign state, thus guaranteeing rollback resistance. 
% This guarantee is file system- and application-agnostic.
We prove that all histories produced by \sys{} are block device crash consistent in \tr{\Cref{sec:correctness}}{the supplementary material}, ensuring that the system always remains in a safe state and is thus rollback resistant.
This guarantee is file system and application-agnostic.

% \natacha{For instance,  any crash-consistent file system \textit{must remain consistent} over \textit{any} block device that produces block device crash consistent histories~\cite{ace,crashmonkey,crashMonkeyAndACE}. As such, \sys{} guarantees that any file system atop \sys{} remains crash-consistent atop rollbacks. Transitively, any database that is correctly implemented atop a crash-consistent file system}

\section{System Model}
\label{sec:system-model}

\sys{} maintains block device crash consistency in the presence of rollback attacks through fault-tolerant replication of disk writes.

\sys{} consists of $N$ machines, running within TEEs.
One machine is the \textit{primary} where the application executes, while the remaining $N-1$ nodes are \textit{backups}.
During execution, the primary replicates writes to at least $f$ backups.
After a crash (and potential rollback), the recovering machine's disk is restored to a block device crash-consistent state by contacting at least $N-f$ existing machines and recovering from the most up-to-date machine.
% \heidi{why $f$ for recovery? I was expecting $N-f$}

\par \textbf{Selecting $N$.}
$N$ can be configured to be any value between $f+1$ and $2f+1$ (the largest $N$ where the replication and recovery quorums still intersect).
The relationship between $N$ and $f$ represent a configurable tradeoff between availability and cost.
Traditional consensus protocols maximize availability by setting $N=2f+1$, reducing the effect of failures on recovery\revisionbarsThree{}{ (remaining live with up to $f$ failures)} at the cost of additional machines.
Primary-backup and chain replication~\cite{chainReplication} protocols minimize cost by setting $N=f+1$, reducing the number of machines at the cost of availability when nodes fail.

By default, \sys{} sets $N = f+1$ and requires explicit recovery to recoup liveness; \sys{} is only live in the absence of failures.
This is in line with what a user can expect from a traditional cloud deployment: if a VM crashes, the user (or some third-party software) is responsible for restarting it or deploying another VM.
\revisionbarsThree{}{Setting $N=2f+1$ guarantees \sys{}'s liveness with up to $f$ backup failures.}

% \david{Cutting correctness guarantees completely here since we already go over it in section 2?}

\section{Design}
\label{sec:design}

\begin{figure}
\begin{subfigure}[t]{0.60\linewidth}
    \centering
    \includegraphics[width=\textwidth]{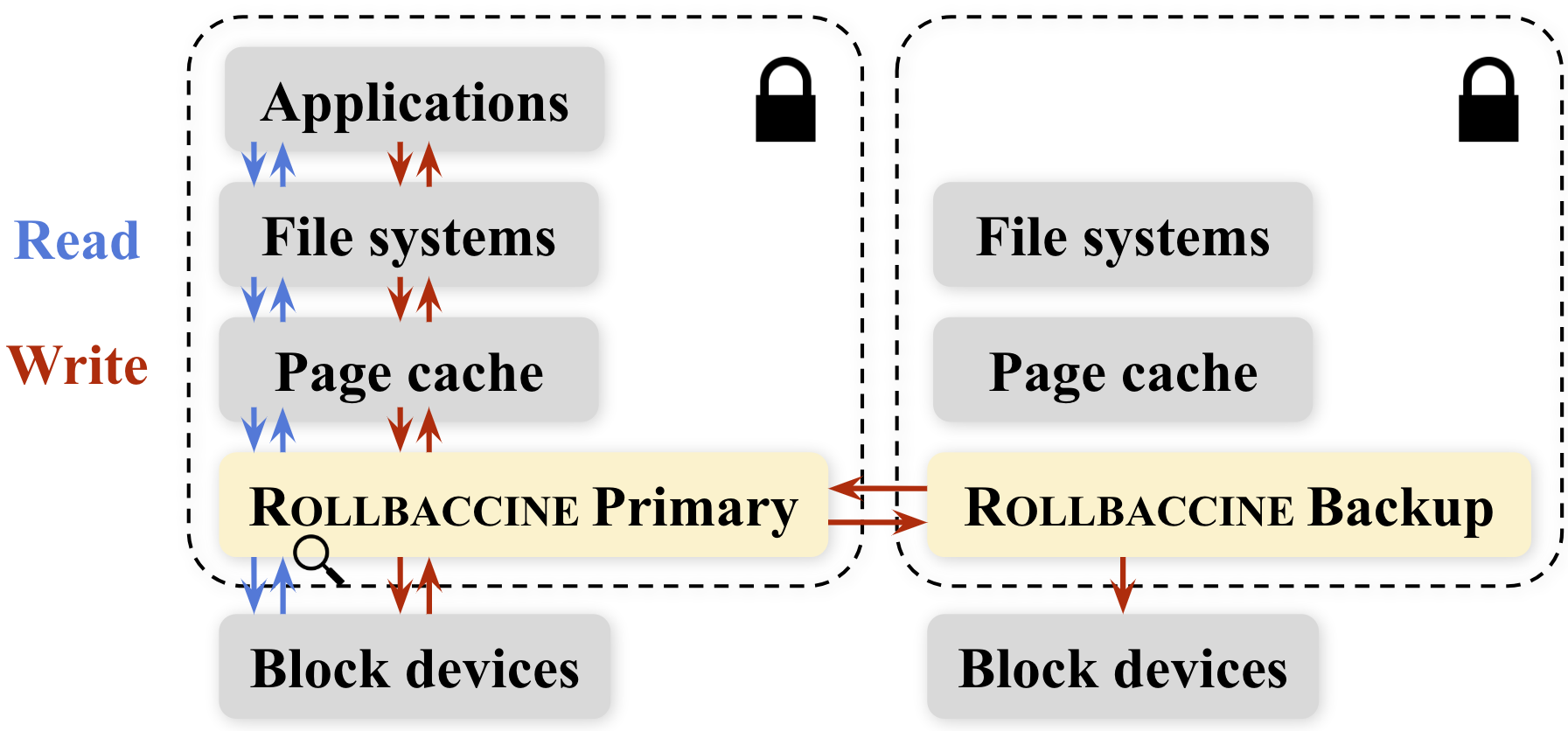}
    \caption{\sys{} on the critical path with $f = 1$.}
    \label{fig:architecture}
\end{subfigure}
\begin{subfigure}[t]{0.39\linewidth}
    \centering
    \includegraphics[width=\textwidth]{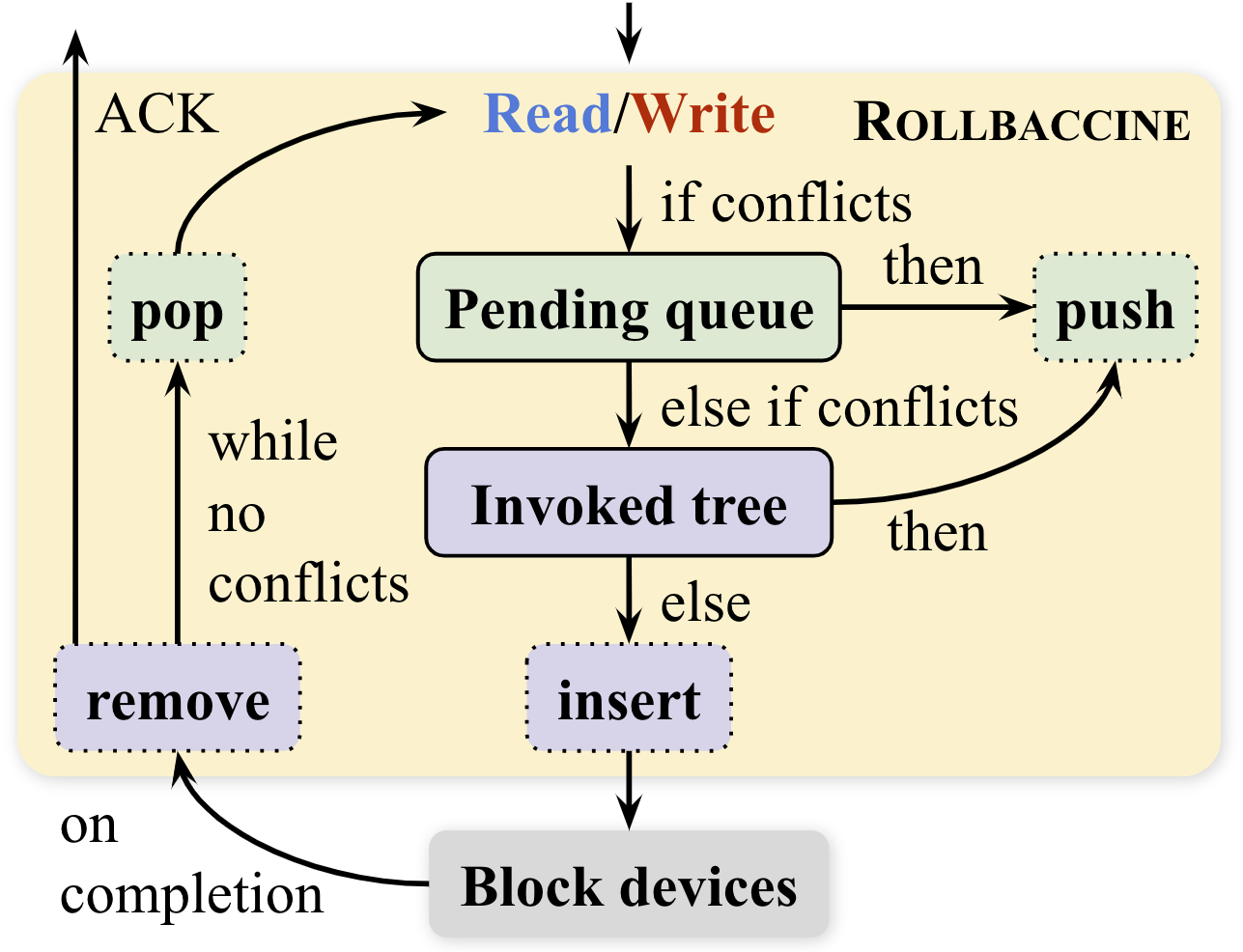}
    \caption{\sys{} concurrency handling. ``Pop'' and ``push'' are operations over the pending queue, and ``insert'' and ``remove'' are operations over the invoked tree.}
    \label{fig:pending-queue-invoked-tree}
\end{subfigure}
\caption{\sys{}'s design.}
\end{figure}

% \nc{I also think that it might be nice to talk about the tradeoffs between replicating for rollback protection and rollback detection? For instance, by hving the figure include the data and the digests? Where if you only replicate the digests then you have rollback detection only, if you replicate data as well, you also have rollback protection. } 
\sys{} must balance the high cost of replication with the constraints placed by block devices.
This tension manifests itself in two areas: synchronous vs asynchronous replication, and multi-threaded vs single-threaded execution.

\textbf{Avoiding unnecessary synchronous replication.}
Na\"ive, synchronous replication of all disk writes to a set of backups is prohibitively expensive, as it requires waiting for responses from sufficiently many backups before acknowledging each write.
Asynchronous replication, on the other hand, introduces a window of vulnerability during which data may be lost: the write may have optimistically been confirmed before replicating to sufficiently many backups.

\sys{} recognizes that applications and file systems already trade-off between performance and persistence: writes are asynchronous by default unless synchronized through operations like \texttt{fsync} or flags like \texttt{O\_SYNC}.
% \texttt{Open("hi.txt",O\_SYNC)}, for example, guarantees that writes to the file will not return before data has been transferred to the disk.
It is already the case that, if the system crashes, the disk is under no obligation to persist asynchronous writes.
\sys{} needs only to provide this same guarantee.
\sys{} thus only synchronously replicates writes tagged with persistence flags and asynchronously backs up all other writes.
% \nc{I don't think this is the right place to talk about the block layer. The block layer arrives (when you list the challenges) in the intro, as a way to handle parallelism without having to worry about whether system calls persist. I'd move over the "We do it at the block layer" in the second section}

\textbf{Multi-threaded execution for non-conflicting operations.}
Disks achieve high throughput by allowing writes to be processed in parallel.
% On the other hand, replicated state machines typically require operations to be replicated in the same order across all members to ensure that states converge.
% In practice, this usually means that operations are applied by a single thread.
% \natacha{BlockSTM, Eve, all use multithreading to apply operations. Would significantly soften claim}
% \heidi{CBs propose softening this sentence, reviews might object to too broader claim here}
To maintain the multithreaded nature of disks when replicating, \sys{} exploits the fact that write invocations between replicas \emph{need not be processed in the same order}.
Because the backups' states are only used in the event of a crash or rollback attack, they simply need to be durable cuts (\Cref{def:durable-cut}) of the primary's state in order to achieve block device crash consistency.
In other words, the backups must respect happens-before relationships and the semantics of persistence flags, but are free to reorder all other operations.
Concretely, \sys{} backups submit write block I/Os to disk according to the total order assigned by the primary, but submissions do not block on the completion of previous I/Os unless they conflict.

As a result, states may actually \emph{diverge} between backups.
However, since all backups maintain a durable cut, the primary can correctly recover from \emph{any} backup's state and still maintain block device crash consistency by definition (\ref{def:block-device-crash-consistency}).
% \natacha{the blocking comment isnt very clear} \david{Was trying to address the concern that after recovery, backups with different durable cuts will diverge in an inconsistent way. Will think about how to reword.}
% \heidi{Isn't it just any backup since the paper currently requires us to wait for all f backups? Picking the freshest is just a optimisation to preserve as many writes as possible but not needed for safety.} \david{Commented out since now we have $N$ and there may be less fresh backups}

% \souj{I find it interesting that the write-persistence order could diverge between primary-backup and every pair of backups too :p -- but it is unclear what relaxes means here - not sequential but?}

% In this section, we first describe \sys{}'s critical path (\Cref{sec:critical-path}), then initialization and recovery (\Cref{sec:initialization-and-recovery}).

\subsection{Critical Path}
\label{sec:critical-path}

% \nc{Is there any chance we could get a diagram that actually maps down a open() write() fsync() to a sequence of block IOs and flags? And then use that as a running example? Otherwise it's still pretty abstract}
We first discuss the steady-state of \sys{}.

\subsubsection{Asynchronous writes on the primary}
\label{sec:async-writes-primary}
Writes without persistence flags, such as those before the fsync in \Cref{table:file-sys-calls-to-block-ios}, are asynchronously replicated to backups.
When \sys{} intercepts a write on the primary $p$, it encrypts and hashes it (with authenticated encryption), stores the hash in memory, then atomically (1) assigns it a monotonically increasing \texttt{writeIndex}$_p$ and (2) places it on the network queue, where it will be signed and sent to the backups.
% \natacha{what do you mean with authenticated encryption? Strange phrasing} \david{I mean AEAD (Authenticated Encryption with Additional Data), does both encryption and integrity checking}

\textbf{Keeping integrity metadata in memory.}
Traditional integrity-preserving systems that keep integrity metadata on disk~\cite{dmIntegrity,azureStorage} are vulnerable to attacks that simultaneously rollback the data and its integrity metadata.

\sys{} instead replicates integrity metadata in-memory, relying on the TEE's integrity guarantees while the machine is online; once offline, integrity metadata must be retrieved from backups during recovery.
To reduce \sys{}'s memory footprint, we create a Merkle tree of hashes and store the lower $L$ layers on disk, verifying any hashes read from disk against the higher layers.
The configuration of $L$ represents a tradeoff between memory usage and read/write amplification from accessing additional blocks on disk.

% \david{Leaving out for now, don't think anyone actually misses this?}
% The total size of the integrity metadata is $\frac{\textit{disk size} \times \textit{hash}}{\textit{granularity}}$, where \textit{granularity} is the size of each block in a block I/O (4KB) and \textit{hash} is the size of the hash of each encrypted block (16B for AES-GCM), totalling $0.4\%$ the size of disk.
% Put in context, 600GB of disk requires 2.4GB of metadata.

Prior work has also explored using Merkle trees (without replication) to detect disk integrity violations~\cite{dm-x,vpfs,mlsdisk,memoir,ariadne,fastver}.
Their correctness rests on keeping the root/tail hash in ``small trusted storage''.
Even if small trusted storage were available (and evidence suggests otherwise~\cite{rote,nimble}), these solutions are at best rollback \emph{detecting} and not \emph{resistant}; once the metadata is corrupted, it cannot be recovered.

\textbf{Managing the integrity of concurrent conflicting writes.}
% After storing its hash \natacha{awk sentence. First question that comes to mind is "storing where?"}, \sys{} submits the write to disk, returning to the upper layer \natacha{what is the upper layer? } when the disk acknowledges the write.
\sys{} then submits the encrypted write to disk, signaling completion once it is acknowledged by disk.

Unfortunately, submitting writes to disk without blocking on previous writes' completion complicates the maintenance of integrity metadata.
Consider two concurrent writes $W,W'$ to block $b$ where $W_{inv}(b) \prec W'_{inv}(b) \prec W_{res}(b)$.
The integrity metadata must match the data of the ``later'' write, but the concurrency prevents us from knowing which write was last.
% In order to verify the integrity of a later read over the same block \nc{Given that you haven't mentioned reads yet, this transition is confusing}, the primary must know which write was processed last, but it cannot know because the writes are concurrent.
% \nc{Why does it matter if you don't know the ordering. Could you not keep both if you had a multiversioned system?. What does it actually mean for a write to be concurrent?} \david{I'm not sure how to really answer this question, a multiversioned system could blow up infinitely (if you had 100 concurrent writes) and also be difficult to recover to consistently. Commenting out for now}

To address this issue, we impose an ordering on same-block writes by maintaining two data structures: a tree of \emph{invoked} writes, sorted by write location, and a queue of \emph{pending} writes, seen in \Cref{fig:pending-queue-invoked-tree}.
After assigning each write a \texttt{writeIndex}$_p$, the primary atomically checks if it conflicts with any other invoked or pending write.
If it does, then the write is placed on the pending write queue and waits to be unblocked.
Otherwise, the primary stores its hash, adds the write to the invoked write tree, and submits it to disk.
Once the write completes, it is removed from the invoked write tree, and any non-conflicting writes are popped off the pending queue in-order and submitted to disk.
At this point, the asynchronous write is marked completed.
% \david{Diagram?} \heidi{yes please} \nc{Agreed. Also having a single description of everything at the top, with the key challenge being how to handle concurrent writes would streamline the discussion quite a bit}

% \nc{I honestly would just merge the two descriptions: Start the paragraph saying: processes writes asynchronously, but needs to be able to check integrity, even in the presence of concurrent writes. To do so, it maintains two data structures ... blah blah. If no pending, then broadcast. Otherwise wait (do you already reply before waiting, or only after?)}

Altogether, this mechanism converts concurrent writes to the same block into sequential writes.
This is similar to the approach taken in Harmonia~\cite{harmonia}, CrossFS~\cite{crossFS}, and
dm-integrity~\cite{dmIntegrity}, which represents the state-of-the-art in the understanding of block device semantics.

\subsubsection{Asynchronous writes arriving at the backups}
\label{sec:async-writes-backup}
Once a write arrives at the backups, the backups must determine the order in which to submit the writes to disk.

The na\"ive solution, executing all writes one-after-the-other according to \texttt{writeIndex}, is a non-starter performance-wise.
The challenge is then parallelizing these writes safely.
To do so, the backups need to determine which writes are to the same block, as block semantics allows non-conflicting writes to be ordered arbitrarily~\cite{crashmonkey,crashMonkeyAndACE,dmIntegrityMailingList}.

We make the following observation: the mechanism used by the primary to \emph{avoid} conflicting writes can be reused by the backups to \emph{permit} non-conflicting concurrent writes.

In order to preserve happens-before relationships between writes to the same block, the backups must still submit writes to disk in order of \texttt{writeIndex}$_p$ as assigned by the primary, but do not wait for the disk to finish processing previous writes; only conflicting writes need to block.
% \heidi{can you elaborate on why? if the disk isn't guaranteed to process writes to different blocks in the same order why does sending order matter?}
Concretely, once a backup $b$ receives a write with $\texttt{writeIndex}_p = \texttt{writeIndex}_b + 1$, it atomically increments \texttt{writeIndex}$_b$, then follows the same process depicted in \Cref{fig:pending-queue-invoked-tree}.
% This simultaneously allows disk bandwidth to be fully utilized on the backup, allowing non-conflicting writes to be concurrently in-flight, while preserving write ordering over individual blocks.
This simultaneously allows non-conflicting writes to be concurrently in-flight while preserving write ordering over individual blocks.

\subsubsection{Synchronous writes}
\label{sec:sync-writes}
Writes tagged with persistence flags are handled identically with one exception: \sys{} does not return the write until backups confirm that they have received all writes with a lower \texttt{writeIndex}.

This subsumes the behavior of both persistence flags.
A write tagged with \texttt{REQ\_FUA} simply needs to be recoverable from the backups, which is clear from the acknowledgment.
A write tagged with \texttt{REQ\_PREFLUSH} requires the persistence of all writes that happen-before it (\Cref{def:happens-before}).
By assigning \texttt{writeIndex} based on invocation order, the primary guarantees that if another write happens-before the \texttt{REQ\_PREFLUSH}, it must have a smaller \texttt{writeIndex}.
Therefore, when a backup acknowledges the \texttt{REQ\_PREFLUSH}, it must have already received the earlier write.
% \nc{This is sequential, does it ever become a bottleneck? Do we need to say something about it?} \david{Submitting to disk is sequential but non-blocking, so the disk still processes multiple writes concurrently and the submission is never the bottleneck} \nc{Talkig about the counter speciificially} \david{Oh ok, that's just an atomic int so it's negligible.}
This design forces \texttt{REQ\_FUA} to behave like \texttt{REQ\_PREFLUSH}, which may increase latency as the backup unnecessarily waits for all previous writes to arrive.
This is intentional.
If backups could acknowledge \texttt{REQ\_FUA}s without waiting for all prior messages, then different backups may be ``fresher'' for different blocks.
Two backups may have each received and acknowledged a different \texttt{REQ\_FUA}, and upon failure and recovery, the primary would be unable to select a single freshest backup to recover from.

% \nc{It might be nice to justify why we made this decision}

\subsubsection{Reads}
\label{sec:reads}
Reads are performed on the primary and do not involve the backups.
To maintain integrity for concurrent reads and writes to the same block, \sys{} inserts reads in the same pending queue/involved tree as concurrent writes.
% \nc{My first thought was, why do you have to do this for reads? Can you add some justification?}
Once the read can be executed, \sys{} fetches the corresponding page from disk, decrypts it, checks it against the hash in memory, and returns the decrypted page if the integrity check succeeds.
% \natacha{strange flow between the "To maintain integrity" paragraph and this one "When sys receives a read from a higher layer. Shouldn't we start by the When sys receives a read from a higher layer?}
% If the check fails---because of a rollback attack or a benign disk corruption---\sys{} crashes the machine, entering recovery upon restarting, where the backups are used to verify the integrity of the primary's disk and recover corrupted data.
If the check fails---because of a rollback attack or a benign disk corruption---\sys{} crashes the machine, entering recovery upon restarting.

\subsection{Recovery}
\label{sec:recovery}

\sys{}'s recovery protocol differs from traditional disk recovery in two ways.
First, it must retrieve in-memory integrity metadata and any corrupted disk pages from the most up-to-date backup.
Second, it must prevent split-brain attacks, where an attacker could feign a crash, wait for the user to ``restart'' the ``crashed'' machine while actually starting a new machine, then route external client traffic between the new and ``crashed'' machines as desired~\cite{narrator}.
% \nc{Still can't really follow the example. You have TLS connections how can you have one machine spoof another machine} \david{The idea is that a third party that needs to access the application (such as a database client) that doesn't have the TLS connection could be routed to a stale database}

We prevent split-brain attacks during recovery by drawing an equivalence to \textit{reconfiguration}~\cite{verticalPaxos,paxosComplex}.
We require the client (or some fault-tolerant third party) to provide each restarted machine a new identity, even if the physical hardware is the same.
Each recovering machine then joins a new configuration that excludes its crashed self, ensuring that stale machines no longer participate in the protocol.

\sys{}'s recovery protocol is based on Matchmaker Paxos~\cite{matchmakerPaxos}, a state-of-the-art vertical reconfiguration~\cite{verticalPaxos} protocol that uses two round-trips: one to a fault-tolerant \revisionbarsTwo{third party}{consensus protocol} to establish the current configuration (the active primary and backups), and another to invalidate all previous configurations.
\revisionbarsTwo{
We use CCF~\cite{ccf}, a TEE-based distributed key-value store, as the third party.
Relying on a third party for reconfiguration is a practice common among consensus protocols~\cite{verticalPaxos,chainReplication,boxwood,gfs,pacifica} and does not affect the critical path of \sys{}.

As CCF is only utilized during recovery, a single CCF service could support many \sys{} instances.
}
{

We use CCF~\cite{ccf} as the consensus protocol as opposed to implementing our own.
CCF is a mature, fault-tolerant, rollback-resistant, replicated key-value store that requires no additional trust assumptions.
As CCF is only needed at the beginning of recovery, its performance does not affect \sys{}; it can be configured to run on the same machines as \sys{} or shared across many \sys{} instances to minimize cost.}
% \nc{I think it's much better than it was before, I would maybe just add one or two sentences more about what vertiical reconfiguration actually does. As soujanya said, maybe a brief count of how many phases they are, etc.}

% During recovery, the recovering node identifies the most up-to-date node---the \textit{designated} node with the latest configuration and highest \texttt{writeIndex}---then copies the designated node's integrity metadata, scans its local disk, and copies over any corrupted pages.
During recovery, the recovering node contacts at least $N-f$ nodes from the latest configuration, identifies the node with the highest \texttt{writeIndex}, copies that node's integrity metadata, scans its local disk, and copies over any corrupted pages.
The recovering node then alerts any nodes in its new configuration, which copy their integrity metadata and corrupted pages from the recovering node as well in order to maintain consistency.
Once the disk is repaired, \sys{} is mounted and can be used as-is.

If fewer than $N-f$ nodes from the latest configuration can be reached during recovery, then \sys{} assumes the worst---that its disk has been rolled-back by a Type II attacker---and aborts recovery.

\revisionbarsThree{Additional details}{The full reconfiguration protocol and its proof} are \tr{in \Cref{sec:appendix-recovery}}{in the supplementary material}.

\section{Evaluation}
\label{sec:evaluation}

\sys{} seeks to provide general and automatic rollback resistance with minimal performance overhead.
In this section we answer the following questions:
\begin{enumerate}
    \item Generality and Automatability: Can \sys{} support unmodified applications, and at what cost? (\Cref{sec:performance-overview})
    \item Performance: How does \sys{} compare against non-automatic rollback resistance solutions? (\Cref{sec:nimble})
    \item Performance: How do \sys{}'s overhead vary as a function of the workload? (\Cref{sec:microbenchmarks})
%    \item Correctness: Is \sys{} robust to failure? (\Cref{sec:eval-recovery}) \souj{How robust -- quantifying it seems important? if we may not then we don't list it here.}
%    \nc{I'd not list it as a main question, we already have 3. Doesn't mean we dont't have it, just don't think it needs to be a top section}
\end{enumerate}

\par \textbf{Implementation.} We implemented \sys{} as a device mapper for Linux kernel 6.8 available at \github{} (3,980 LoC).
AEAD uses in-kernel AES-GCM; hashing uses HMAC-SHA256.
We use in-kernel TCP connections with signed messages for primary-backup communication.

\par \textbf{Experimental setup.}
\label{sec:experimental-setup}
We use Azure \texttt{DC16ads\_v5} machines (16 vCPUs, 64GB RAM, 10 Gbps network, AMD SEV-SNP TEE) in the North Europe region.
Ping time is 0.3ms.
We mount \sys{} over local disk to avoid the default replication Azure provides (which does not protect against rollbacks).
Experimental results are the average over 3 runs.
% \natacha{devil's advocate: what does RB do extra that Azure doesn't do? If azure already replicates, why do we need rollbaccine}
% \david{Traditional integrity checking assumes that if you put integrity info on disk, it's highly unlikely that the integrity and data will be corrupted together in a correct way. In Rollbaccine's setting, it's highly likely. We also have the async writes trick.}
% \heidi{+1. Both do replication (three way in the case of managed disks) but managed disk assume the host is trusted and just that disk might occasionally be corrupted, whereas rollbaccine has a much strong threat model that assume that the node might be actively malicious}

We compare \sys{}'s performance against four systems: Unreplicated, DM, Replicated, and Nimble.
\textbf{Unreplicated} reads and writes from local (ephemeral) disk without replication.
It represents the highest-performing but least durable and secure option.
% \textbf{DM} adds dm-crypt and dm-integrity for encryption and integrity validation, using the same AES-GCM cipher as \sys{}.
dm-crypt + dm-integrity provides confidentiality and detection of random data corruptions.
\textbf{DM} adds dm-crypt and dm-integrity for encryption and detection of random data corruptions, using the same AES-GCM cipher as \sys{}.
% \souj{Could we list and cite systems to justify that run in this config?}
\textbf{Replicated} uses the highest-performing durable disk available to Azure VM-based TEEs, a locally 3-way replicated P80 Premium SSD rated for 20,000 IOPS.
Both DM and Replicated write integrity metadata to disk~\cite{azureStorage}, which is not sufficient against rollback attacks; the integrity hash could be rolled back along with the data by a motivated attacker.
% , and \changebars{thus}{is less secure, as} it does not prevent an attacker from rolling back the integrity-protected data along with the integrity hashes. 
% Like DM, \changebars{it}{replication} similarly provides integrity protection against random data corruption; however, the integrity hash could be rollback along with the dat by a motivated attacker.
% \natacha{Hard to parse that last sentence. When could that arise? Conceptually, what's the difference regarding integrity between Replicated and DM}.
% \heidi{@Natacha, I've rewritten it should be clearer}
% None of the Unreplicated, DM, and Replicated settings provide rollback resistance.

\textbf{Nimble}~\cite{nimble} is a state-of-the-art solution against rollback attacks that is general, resistant, but not automatic.
Applications must be \emph{manually modified} to send state updates to a ``coordinator'' that persists the updates to untrusted storage, replicates to 3 TEE-based ``endorsers'', and then replies to the application.
These modifications are labor-intensive; it took three person-months to modify HDFS into NimbleHDFS~\cite{nimble}.

We evaluate against four configurations of NimbleHDFS: \textbf{NimbleHDFS-100}, \textbf{NimbleHDFS-100-Mem}, \textbf{NimbleHDFS-1}, and \textbf{NimbleHDFS-1-Mem}.
The number (100 or 1) represents batch size.
The original paper batches and \revisionbarsThree{replicating}{replicates} every 100 writes, creating a window of vulnerability during which writes marked ``durable'' may be rolled back by an attacker~\cite{nimble}, breaking the semantic guarantees of HDFS.
Setting batch size to 1 preserves semantics.
The -Mem modifier indicates whether state updates are persisted to locally replicated \texttt{Standard LRS} storage as described in the paper or kept in the coordinator's memory (and not fault tolerant).
We co-locate the coordinator machine with the NimbleHDFS to reduce network latency.
% Azure storage uses local replication (\texttt{Standard LRS}) instead of geo-replication to avoid unfair cross-region latency penalties.

\sys{} is evaluated with \revisionbarsTwo{six}{seven} configurations.
\textbf{\sys{}} is the standard setup, with $f=1$ and $L=0$ (all 2.4GB of integrity metadata in memory).
\revisionbarsTwo{}{\textbf{\sys{}-multicloud} uses a GCP \texttt{n2d-standard-16} machine (16 vCPUs, 64GB RAM, AMD SEV-SNP TEE) in the West Europe region as the backup to evaluate the cost of cross-cloud deployments.
Cross-cloud deployments guarantee rollback resistance even in the presence of Type II attackers (malicious cloud providers).
Its ping time to the Azure machine is 23ms.}
\textbf{\sys{}-sync} synchronously replicates all writes regardless of persistence flags in order to isolate the effect of asynchronous replication.
\textbf{\sys{}-f=0} and \textbf{\sys{}-f=2} toggle between no backups (only rollback detecting) and 2 backups, measuring the overhead of networking.
\textbf{\sys{}-L=1} and \textbf{\sys{}-L=2} place the bottommost $L$ layers of the integrity metadata Merkle tree on disk, measuring the overhead of read/write amplification, requiring only 0.15GB and 9.6MB of memory for integrity metadata respectively.

\subsection{Performance Overview}
\label{sec:performance-overview}

% We evaluate the overheads of \sys{} relative to other baselines with the
We evaluate \sys{} with the following benchmarks and unmodified applications:
TPC-C~\cite{tpc-c} over PostgreSQL mounted on ext4 (\Cref{fig:postgres}),
NNThroughputBenchmark~\cite{nnThroughputBenchmark} over HDFS~\cite{hdfs} mounted on ext4 (\Cref{fig:hdfs-throughput}), and
Filebench~\cite{filebench} Varmail and Webserver workloads over ext4 and xfs (\Cref{fig:varmail-throughput,fig:varmail-latency}).

% \changebars{\sys{} makes PostgreSQL, HDFS, and Filebench Webserver rollback-safe with less than $15\%$ throughput and latency overhead.
% Filebench Varmail experiences higher overheads with its high frequency of persistence operations.}{}
% \nc{Cut for space if you want. Fine sentence but you're going to talk about results}

\begin{figure}[t]
    \centering
    \includegraphics[width=0.5\linewidth]{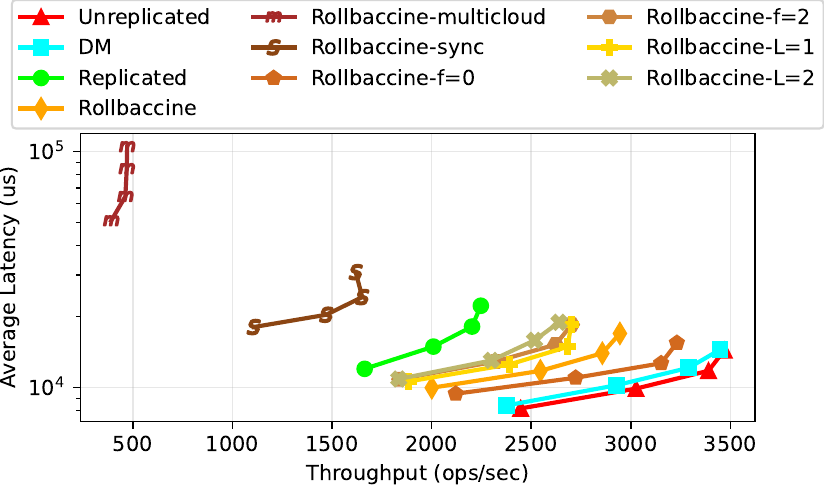}
     \caption{PostgreSQL TPC-C throughput-latency graph\revisionbars{}{ with 20, 30, 40, and 50 clients}.}
     \label{fig:postgres}
\end{figure}

\textbf{PostgreSQL.}
PostgreSQL is a transactional database that guarantees the durability of committed transactions by persisting writes to disk.
Rollback attacks on disk can break durability, allowing attackers to remove unwanted transactions.
% \textit{even if} the transactional database executes within a TEE, allowing attackers to remove unwanted transactions.
PostgreSQL contains 1.3M LoC, making it infeasible to manually rewrite for rollback resistance. It is therefore a prime target for \sys{}, which promises \textit{automatic} rollback resistance.
% \souj{This sentence feels like a misfit here, move it to implementation?} \david{I don't think it would make sense in the implementation, which is about Rollbaccine and not any of the applications or benchmarks}
% We benchmark PostgreSQL with TPC-C, a standard OLTP benchmark for transactional databases.
% We configure TPC-C to run with 10 warehouses and set the isolation level to \texttt{TRANSACTION\_SERIALIZABLE}.
We benchmark PostgreSQL using TPC-C with 10 warehouses and isolation level \texttt{TRANSACTION\_SERIALIZABLE}.
The results are in \Cref{fig:postgres}. \revisionbarsTwo{}{Latency is log-scale.}

Compared to Unreplicated, DM introduces negligible overhead.
Replicated and \sys{} respectively reduce throughput by $35\%$ and $15\%$ and increase latency by $54\%$ and $19\%$.
This can be attributed to the fact that when benchmarked with TPC-C, roughly every 1 in 5 operations in PostgreSQL are persisted, because every transaction must be durably flushed to PostgreSQL's Write Ahead Log (WAL) before commit.
Both Replicated and \sys{} must then synchronously replicate over the network, introducing additional delay, although the latency for Replicated is an order of magnitude greater (\Cref{sec:microbenchmarks}).
Despite this, the performance penalty is not severe because, at 10 warehouses, TPC-C is contention bottlenecked.
% \souj{Traceback the additional overheads from \sys{} relative to replication.}

\revisionbarsTwo{Of the configurations of \sys{}, \sys{}-sync performs the worst, unable to leverage the benefits of asynchronous replication.}
{
Of the configurations of \sys{}, \sys{}-multicloud has the worst performance due to high inter-cloud latency.
\sys{}-sync is also an outlier, as it is unable to leverage the benefits of asynchronous replication.
This represents a lower bound on performance.
The remaining configurations of \sys{} take advantage of the observation that applications, such as PostgreSQL, are \emph{already} designed to minimize persistence and carefully choose when to \texttt{fsync}, so blocking on replication is only necessary for persistent writes.

}
% \nc{Since we have space, I would add some more explanation here. This represents a lower-bound on performance. Applications are *already* designed to avoid synchronous writes and carefully choose when to fsync (due to the high costs of disk). Postgres only issues 20\% of synchronous writes (for instance). Rollbaccine piggy backs on this design principle (or something along those lines).}
The differences between \sys{}-f=0, \sys{} (with f=1), and \sys{}-f=2 illustrate the overhead of networking, whereas the differences between \sys{}, \sys{}-L=1, and \sys{}-L=2 demonstrate the effect of read/write amplification from accessing Merkle tree integrity metadata on disk.

The results confirm that a major component of \sys{}'s high performance stems from its differentiation between synchronous and asynchronous replication, and that \sys{} can switch between different levels of fault tolerance and memory usage without significant penalty.

\begin{figure*}[t]
    \centering
    \includegraphics[width=0.75\linewidth]{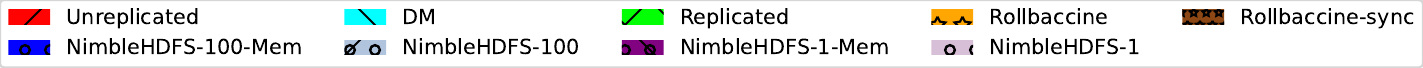}
    \centering
     \begin{subfigure}[b]{0.62\linewidth}
         \centering
         \includegraphics[width=\textwidth]{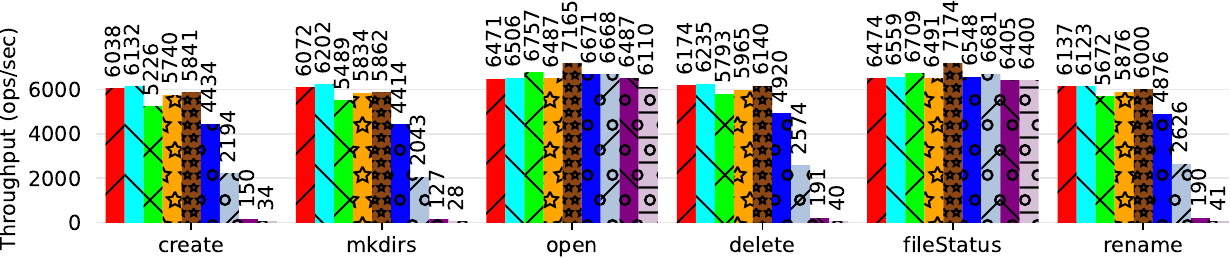}
         \caption{HDFS NNThroughputBenchmark Throughput}
         \label{fig:hdfs-throughput}
     \end{subfigure}
     \hfill
     \begin{subfigure}[b]{0.18\linewidth}
         \centering
         \includegraphics[width=\textwidth]{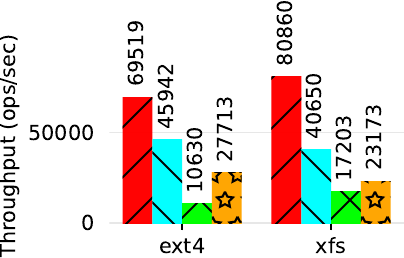}
         \caption{Varmail Tput}
         \label{fig:varmail-throughput}
     \end{subfigure}
     \hfill
     \begin{subfigure}[b]{0.18\linewidth}
         \centering
         \includegraphics[width=\textwidth]{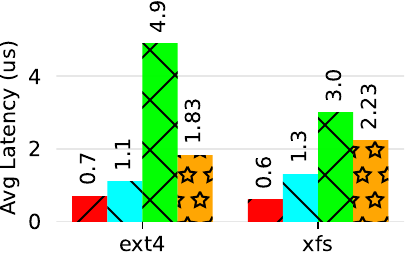}
         \caption{Varmail Latency}
         \label{fig:varmail-latency}
     \end{subfigure}
    \caption{Performance results. \revisionbars{Nimble's configurations contain ``$\circ$''.}{Each bar is labeled with the exact throughput on top.}}
    \label{fig:performance-overview}
\end{figure*}

\textbf{HDFS.}
Hadoop Distributed File System is the file system backing Hadoop MapReduce.
% cloud applications running on Hadoop MapReduce.
Rollback attacks can break the persistence guarantees of HDFS~\cite{hdfs}.
% Securing HDFS against rollback attacks required three person-months of manual effort in Nimble~\cite{nimble} and the modification of 1,689 lines of code; \sys{} promises \textit{automatic} rollback resistance with no code modifications.
% We will compare the performance of the manually modified NimbleHDFS and HDFS over \sys{} in detail in \Cref{sec:nimble}.
% In this section, we first evaluate the overhead of HDFS over \sys{}.
We configure HDFS to run with one namenode and evaluate it with Hadoop's NNThroughputBenchmark~\cite{nnThroughputBenchmark}; each operation uses 500,000 files (or directories for \texttt{mkdirs}) and 16 client threads~\cite{nimble}.
Results are in \Cref{fig:hdfs-throughput}.

DM and \sys{} perform similarly to Unreplicated, reducing throughput by at most $5\%$ and at times outperforming Unreplicated (attributed to experimental noise).
This is because
% (1) HDFS commits writes by appending to a file called \texttt{editlogs} and then flushing updates to disk; writes cannot return until the updates are persisted.
% As a result, updates are batched into sequential writes, and the additional cost of a round trip to the backup for \sys{} is amortized.
NNThroughputBenchmark, regardless of the number of client threads, uses a single thread to communicate with HDFS in order to isolate the overhead of RPC calls~\cite{nnThroughputBenchmark}.
Once enough client threads are launched (16 is enough) on Unreplicated, DM, or \sys{}, this single thread becomes the bottleneck, not HDFS.

% \souj{Are we still discussing HDFS? or Missing a "Varmail and Webserver with filebench" heading?} \david{This is still HDFS}
Replicated suffers a higher $13\%$ throughput overhead; its high latency delays file persistence and reduces throughput.

\textbf{ext4 and xfs.}
ext4 and xfs are file systems in the Linux kernel with traditional POSIX semantics that we mount over \sys{}, providing rollback resistance to \textit{any} TEE application that reads and writes to either file system.
% \souj{Say ext4 and xfs are just two POSIX file systems that we run \sys{} unmodified with.}
% \natacha{Let's make sure to highlight this point in the intro}

We emulate such applications with Filebench using the default Varmail and Webserver profiles.
Varmail is a highly synchronous workload that writes and explicitly calls \texttt{fsync} every 4 operations.
Its results can be found in \Cref{fig:varmail-throughput,fig:varmail-latency}.
Webserver is completely asynchronous, executing reads and occasionally appending to a logfile.
Both workloads are run for the default 60 seconds.

The throughput and latency trends are similar for ext4 and xfs, so we will discuss them together.
We first examine Varmail.
Unlike TPC-C (contention bound with 10 warehouses) and NNThroughputBenchmark (bottlenecked on a single thread), Varmail is bottlenecked on disk, so DM, Replicated, and \sys{} all experience throughput and (inversely proportional) latency degradations due to the high volume of synchronous writes.
Replicated has the highest average latency per operation due to its high fsync latency.
% \souj{discuss average and say tail latency spikes are not high} \david{The benchmark does not provide tail latencies}
\sys{} has the second-highest latency, because it must similarly wait for a network round trip, reducing throughput by $71\%$ and increasing latency by $2.7\times$.
DM does not perform networking but still suffers from synchronously flushing journal entries to disk.

In contrast, all configurations perform similarly for Webserver, which does not require any synchronous operations and mostly performs sequential reads that can be served from prefetched pages.

In summary, except for Varmail, \sys{} adds a maximum of $19\%$ overhead to the Unreplicated baseline across diverse workloads.
The fact that \sys{} is able to provide rollback resistance for all these systems without code modifications demonstrates its versatility and ease-of-use.

\subsection{Comparison against Nimble}
\label{sec:nimble}

\revisionbarsThree{
\sys{} provides rollback resistance for any program mounted over \sys{}'s device mapper.
If an application relies on a cloud service for persistence, then the service itself must be mounted over \sys{}, requiring buy-in from cloud providers.

Nimble does not require cloud provider buy-in and instead detects rollbacks in existing services.
Unlike block devices, cloud services have diverse APIs, so Nimble sacrifices automation for compatibility.
In addition, the amount of integrity metadata that must be preallocated for cloud services is often opaque to the end-user, so Nimble opts to maintain a log of updates instead.
This serves Nimble well for the applications it targets, which can tolerate of vulnerability during which data may be rolled back without detection, allowing Nimble to batch log updates.

Unfortunately, the overheads of sequential log replication resurface when batching is disabled for safety.
In NNThroughputBenchmark's write operations (\texttt{create}, \texttt{mkdirs}, \texttt{open}, \texttt{delete}, \texttt{rename}), \sys{} outperforms NimbleHDFS-1 by $208\times$ (\Cref{fig:hdfs-throughput}).

Azure storage plays a role; both NimbleHDFS-100 and NimbleHDFS-1 underperform their in-memory counterparts by $2-5\times$.

Nimble's performance penalties, however, mainly stem from synchronous, sequential log replication.
% NimbleHDFS operations that use multiple threads to save files in parallel must still sequentially append, sign, and replicate each log entry.
NimbleHDFS's multi-threaded file operations must still sequentially append, sign, and replicate each log entry.
Its use of asymmetric ECDSA-SHA256 signatures allows the log to be publicly verifiable but introduces additional overhead in the critical section.
NimbleHDFS's throughput then becomes a function of its batch size, reducing NimbleHDFS-1's throughput to double-digits.

Reads (\texttt{open} and \texttt{fileStatus}) on the other hand are local, so all systems perform similarly.
}
{
Nimble implements rollback resistance by maintaining a replicated log; applications must be modified in order to append state updates to the log.
Nimble then batches those updates in order to improve performance; the way it batches, however, is incompatible with block device crash consistency.
If the batch size is set to $B$, then replication occurs every $B$ writes; the first $B-1$ writes of each batch, persistent or not, will be returned immediately before they are made rollback resistant.
% \nc{Still confusing I think. There's no reason why batching should cause Nimble to be block device crash consistent. It could just wait to respond until the batch is complete. Should we say that Nimble returns immediately and periodically, every B writes, makes any existing writes rollback resistant? For ex, in consensus, batching doesn't cause the records within the batch to see a weaker consistency guarantee}
NimbleHDFS-1, with a batch size of 1, is therefore the only configuration that is \emph{safe}.

We ask whether a general purpose rollback tolerant solution like \sys{} can match the performance of a manual, rollback-resistant solution.
We compare against NimbleHDFS-1, the only version of Nimble to provide block device crash consistency.
Our results were surprising.
\sys{} not only matches but outperforms NimbleHDFS-1 by $208\times$ on NNThroughputBenchmark's write operations (\texttt{create}, \texttt{mkdirs}, \texttt{delete}, \texttt{rename}) (\Cref{fig:hdfs-throughput}).
The discrepancy holds even when comparing against \sys{}-sync, where we have \sys{} artificially replicate \textit{every} disk write synchronously.

Azure storage plays a role; both NimbleHDFS-100 and NimbleHDFS-1 underperform their in-memory counterparts by $2-5\times$.

But the culprit is CPU (\Cref{sec:performance-overview}).
Round-trip network latency introduces negligible overhead on the CPU-bottlenecked thread (which is why \sys{} and \sys{}-sync perform similarly), whereas the asymmetric ECDSA-SHA256 signatures used by Nimble's messages overwhelm it.
NimbleHDFS's throughput then becomes a function of how many messages it must send; larger batch sizes allow it to amortize signatures across log entries, but once batching is disabled for safety, its throughput reduces to double-digits.

Reads (\texttt{open} and \texttt{fileStatus}) on the other hand are local, so all systems perform similarly.
% Note that the throughput for Nimble in our evaluation is roughly $40\%$ lower than what was originally reported across operations~\cite{nimble}.
% This is because the Nimble paper used compute-optimized \texttt{F64s\_v2} VMs, which do not support TEEs.
% Nevertheless, the relative throughput between the configurations are the same in both papers.
}
%\nc{I'm having a hard time following this paragraph as I think we're over indexing to the reviewer's questions a little bit. It also gets confusing that you spend so long talking about how crazily better we are than nimble for perf, only to conclude it's not fundamental. Can we try to brainstorm a way to reorganise such that the flow is a bit better? My understanding is that the high level points we want to convey are 1) a discussion on the correctness of Nimble relative to Rollbaccine, 2) the fact that we can achieve performance on par with a general solution 3) acknowledge that nimble's model has benefits over Rollbaccine (no cloud buy-in necessary, support for SGX). We could in theory support SGX, but don't. The cloud buy in part is fundamental though.}

% Despite the reduced instance sizes, we were able to replicate the results of NimbleHDFS-100, which still performs $2-3\times$ worse than Unreplicated for the write operations, and is comparable to Unreplicated for read operations .

\subsection{Microbenchmarks}
\label{sec:microbenchmarks}

\begin{figure*}[t]
    \centering
    \hfill
    \includegraphics[width=0.5\textwidth]{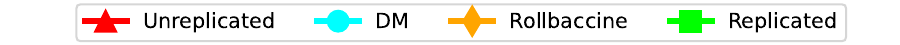}
    \hfill \break
    \centering
    \begin{subfigure}[b]{0.245\textwidth}
         \centering
         \includegraphics[width=\textwidth]{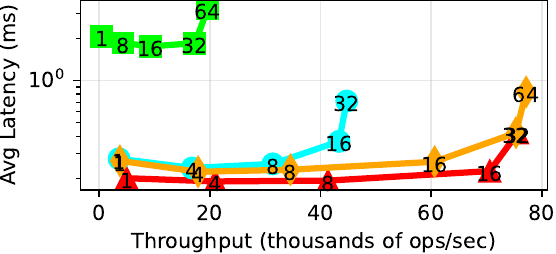}
         \caption{Random read, buffered}
         \label{fig:rand-read-buffered}
     \end{subfigure}
     \hfill
     \begin{subfigure}[b]{0.245\textwidth}
         \centering
         \includegraphics[width=\textwidth]{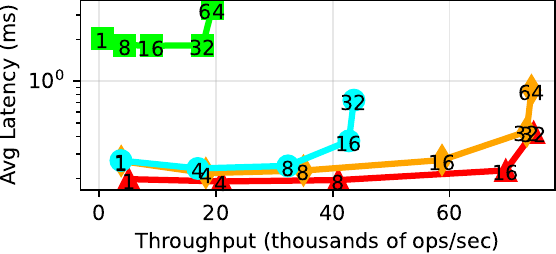}
         \caption{Random read, direct}
         \label{fig:rand-read-direct}
     \end{subfigure}
     \hfill
     \begin{subfigure}[b]{0.245\textwidth}
         \centering
         \includegraphics[width=\textwidth]{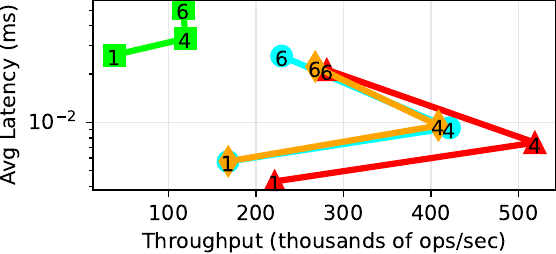}
         \caption{Read, buffered}
         \label{fig:seq-read-buffered}
     \end{subfigure}
     \hfill
     \begin{subfigure}[b]{0.245\textwidth}
         \centering
         \includegraphics[width=\textwidth]{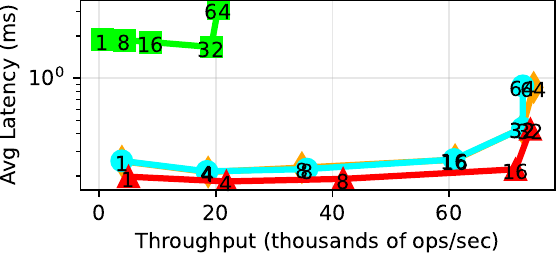}
         \caption{Read, direct}
         \label{fig:seq-read-direct}
     \end{subfigure}
     \hfill
     \begin{subfigure}[b]{0.245\textwidth}
         \centering
         \includegraphics[width=\textwidth]{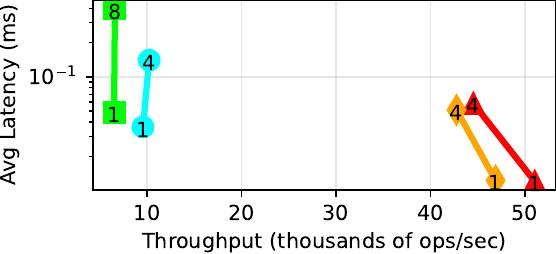}
         \caption{Random write, buffered}
         \label{fig:rand-write-buffered}
     \end{subfigure}
     \hfill
     \begin{subfigure}[b]{0.245\textwidth}
         \centering
         \includegraphics[width=\textwidth]{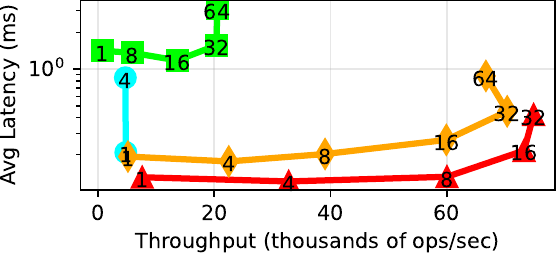}
         \caption{Random write, direct}
         \label{fig:rand-write-direct}
     \end{subfigure}
     \hfill
     \begin{subfigure}[b]{0.245\textwidth}
         \centering
         \includegraphics[width=\textwidth]{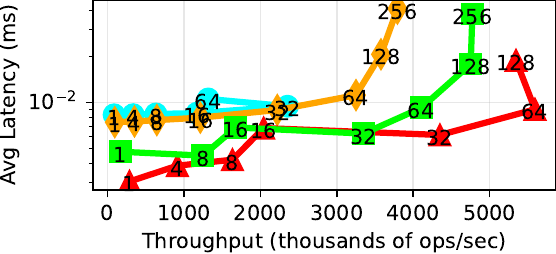}
         \caption{Write, buffered}
         \label{fig:seq-write-buffered}
     \end{subfigure}
     \hfill
     \begin{subfigure}[b]{0.245\textwidth}
         \centering
         \includegraphics[width=\textwidth]{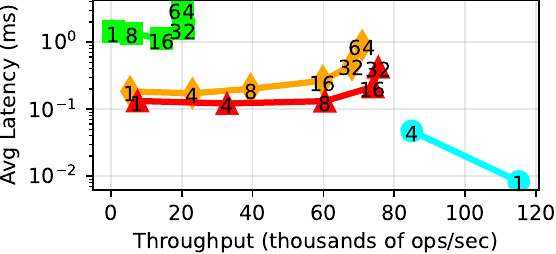}
         \caption{Write, direct}
         \label{fig:seq-write-direct}
     \end{subfigure}
     \hfill
     \begin{subfigure}[t]{0.245\textwidth}
         \centering
         \includegraphics[width=\textwidth]{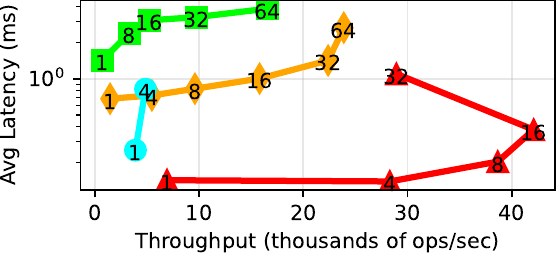}
         \caption{Random write, fsync, buffered}
         \label{fig:rand-write-buffered-fsync}
     \end{subfigure}
     \hfill
     \begin{subfigure}[t]{0.245\textwidth}
         \centering
         \includegraphics[width=\textwidth]{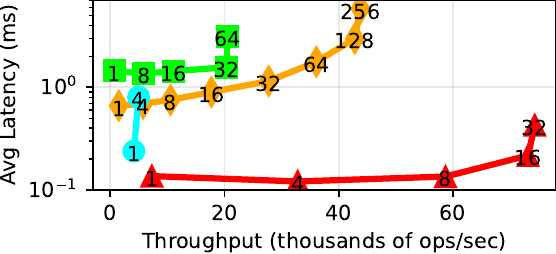}
         \caption{Random write, fsync, direct}
         \label{fig:rand-write-direct-fsync}
     \end{subfigure}
     \hfill
     \begin{subfigure}[t]{0.245\textwidth}
         \centering
         \includegraphics[width=\textwidth]{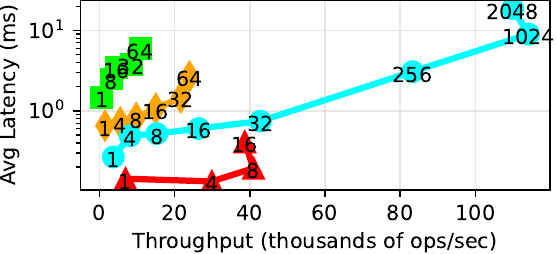}
         \caption{Write, fsync, buffered}
         \label{fig:seq-write-buffered-fsync}
     \end{subfigure}
     \hfill
     \begin{subfigure}[t]{0.245\textwidth}
         \centering
         \includegraphics[width=\textwidth]{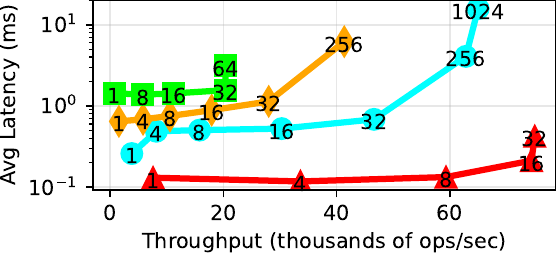}
         \caption{Write fsync, direct}
         \label{fig:seq-write-direct-fsync}
     \end{subfigure}
    \caption{Low contention throughput-latency graphs \revisionbars{of microbenchmarks}{annotated with the number of threads}. Latency is log-scale.}
    \label{fig:microbenchmarks}
\end{figure*}

We analyze \sys{}'s performance with \texttt{fio}, varying I/O direction (read or write), sequentiality (sequential or random), buffering (\texttt{O\_DIRECT} or not), 
\revisionbarsThree{and persistence (synchronous or asynchronous writes)}
{
persistence (synchronous or asynchronous writes), and contention (whether all threads write to the same part of the device or not)}.
All operations are of size 4K with iodepth 1.
We gradually increase the number of fio threads until throughput saturates for each configuration.
For each test, we perform 30 seconds of warmup (filling the page cache), then record statistics for 60 seconds.

Contention is simulated by restricting the set of blocks that each experiment accesses.
In the low contention experiments, all threads start execution from the beginning of the block device, leading to an initial spike in contention that tapers off as slightly faster threads no longer contend with slower ones; in the high contention experiments, threads repeatedly access the same initial 4K bytes.

\Cref{fig:microbenchmarks,fig:contention} display the throughput (thousands of IOPS) and average completion latency (ms) \revisionbarsThree{for each experiment}{of low- and high-contention experiments, respectively}.
Each plot point in the graph is annotated with the number of threads used.
Note that latency \revisionbarsThree{}{(and throughput in \Cref{fig:contention})} is log scale, and that the throughput and latency scales change for each graph.
% For example, in \Cref{fig:seq-write-buffered}, latency is in the $10^{-2}$ms and throughput goes up to 6,000,000 IOPS, while in \Cref{fig:rand-read-direct}, latency is in the $10^{-1}$ms and throughput only goes up to 70,000 IOPS.

We first describe general trends.

\textbf{Direct I/O or persisted writes.}
When either \texttt{O\_DIRECT} or fsync are used for writes, latency increases to the sub-millisecond range and throughput caps at around 75-150,000 IOPS across all tests (\Cref{fig:rand-read-direct,fig:seq-read-direct,fig:rand-write-direct,fig:seq-write-direct,fig:rand-write-direct-fsync,fig:seq-write-direct-fsync,fig:contention})\revisionbarsThree{}{ except for high contention reads over DM, explained below}.
This is because the disk cannot coalesce writes, either because it receives each operation individually (\texttt{O\_DIRECT}) or requires immediate persistence (fsync).

\textbf{Random access.}
Random accesses cap out at 50-80,000 IOPS and sub-millisecond latency (\Cref{fig:rand-read-buffered,fig:rand-read-direct,fig:rand-write-direct,fig:rand-write-direct-fsync,fig:rand-write-buffered-fsync}), with the exception of buffered writes in \Cref{fig:rand-write-buffered}.
For buffered reads, the cap is imposed because page prefetching is ineffective for random accesses and each read must be individually serviced by disk.
For buffered writes, latency is an order of magnitude lower, because although the writes are random, they can still be batched in the page cache and immediately returned.
Throughput, however, is quickly capped once writes fill the page cache and the disk becomes the bottleneck.

\begin{figure}[t]
    \centering
    \begin{subfigure}[b]{0.25\linewidth}
         \centering
         \includegraphics[width=\textwidth]{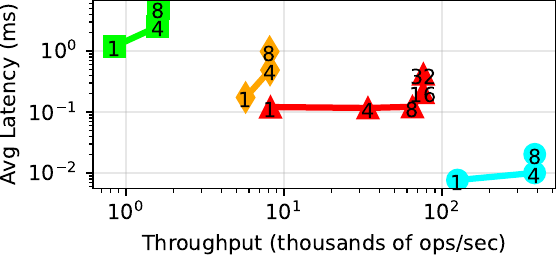}
         \caption{Write, direct}
         \label{fig:write-high-contention}
    \end{subfigure}
    \begin{subfigure}[b]{0.25\linewidth}
         \centering
         \includegraphics[width=\textwidth]{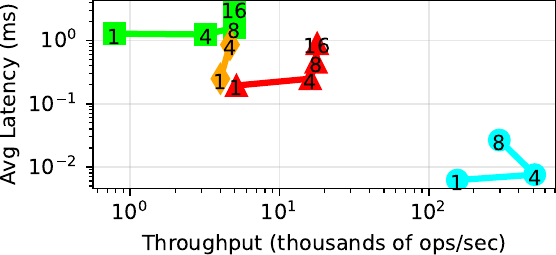}
         \caption{Read, direct}
         \label{fig:read-high-contention}
    \end{subfigure}
    \caption{\revisionbars{}{High-contention graphs similar to \Cref{fig:microbenchmarks}. Both latency and throughput are log-scale.}}
    \label{fig:contention}
\end{figure}

We now explain the performance of each configuration.

\textbf{Unreplicated.}
Reads reach a peak throughput of around 75,000 IOPS and sub-millisecond latency (\Cref{fig:rand-read-buffered,fig:rand-read-direct,fig:seq-read-direct}), except for buffered sequential reads, which reach 500,000 IOPS and $10^{-2}$ms latency (\Cref{fig:seq-read-buffered})\revisionbarsThree{. This is because unlike the other read workloads, buffered sequential reads can consistently read prefetched pages from the page cache.
However, as the number of threads increase, each thread (sequentially) reads from a different location on disk, lowering the efficacy of prefetching and capping throughput.
This behavior is universal across configurations.}{, and high contention reads, which are capped at 18,000 IOPS (\Cref{fig:read-high-contention}).
Buffered sequential reads benefit from prefetching, although as the number of threads increase, each thread (sequentially) reads from a different location on disk, lowering the efficacy of prefetching and capping throughput.
This behavior is universal across configurations.
High contention reads fail to scale across all configurations (except for DM, explained below) due to read collisions at the SSD level~\cite{ssdPerformance,ssdReadCollisions}.
}
% \nc{What do you mean by "suggesting"?}

The throughput and latency of Unreplicated is identical for all write workloads with \texttt{O\_DIRECT} (\Cref{fig:rand-write-direct,fig:seq-write-direct,fig:rand-write-direct-fsync,fig:seq-write-direct-fsync,fig:write-high-contention}), regardless of sequentiality\revisionbarsThree{ or persistence}{, persistence, or contention}, since those writes are disk I/O bottlenecked.
For buffered, persisted writes (\Cref{fig:rand-write-buffered-fsync,fig:seq-write-buffered-fsync}), fsync latency spikes and cripples throughput due to the constant flushing of the page cache.

For the remaining workloads, the behavior of random buffered writes (\Cref{fig:rand-write-buffered}) is explained in the paragraph on random access, and sequential buffered writes (\Cref{fig:seq-write-buffered}) simply measure how quickly full pages can be flushed to disk.

\textbf{DM.}
The majority of overhead for DM comes from dm-integrity~\cite{dmIntegrity}, which maintains a journal of write blocks and their integrity metadata on disk.
The journal entry is flushed to disk when persistence is required, and data is asynchronously copied from the entry to their actual locations on disk.
When a read is requested, if the metadata is not in memory, it must also be fetched from disk.

Fetching metadata is expensive for random accesses, which explains DM's early saturation for random reads (\Cref{fig:rand-read-buffered,fig:rand-read-direct}).
For random writes, the asynchronous copying of data from the journal entry to random regions of disk becomes the throughput bottleneck (\Cref{fig:rand-write-buffered,fig:rand-write-direct,fig:rand-write-buffered-fsync,fig:rand-write-direct-fsync}).

For direct, non-persisted, sequential writes, DM has significantly lower latency than all other configurations (\Cref{fig:seq-write-direct}).
This is because while other configurations directly submit write I/Os to disk, DM builds its own internal cache in the form of asynchronous journal flushes.
Once persistence is required, this no longer gives DM an edge in latency (\Cref{fig:seq-write-direct-fsync}).

% Finally, we must explain how DM's throughput continues rising for sequential, persisted writes (\Cref{fig:seq-write-buffered-fsync,fig:seq-write-direct-fsync}).
% This can again be attributed to journaling.
For sequential, persisted writes (\Cref{fig:seq-write-buffered-fsync,fig:seq-write-direct-fsync}), DM's throughput continues rising due to its journaling.
Although journal entries must be flushed to disk after an fsync, a single journal entry's flush can account for the persistence of multiple writes, in effect batching the fsyncs.

\revisionbarsThree{}{Journaling also allows DM to minimize disk accesses when it comes to high contention (\Cref{fig:contention}).
Reads from journal entries already in-memory can be serviced without going to disk, and writes can be returned immediately after a journal entry is created in-memory.}

\textbf{Replicated.}
Throughput and latency for Replicated is capped by Azure at 20,000 IOPS and millisecond latency, except for sequential buffered reads, random buffered writes, and sequential buffered writes (\Cref{fig:seq-read-buffered,fig:rand-write-buffered,fig:seq-write-buffered}), which benefit from page prefetching and caching.

\textbf{\sys{}.}
Reads in \sys{} perform similarly to Unreplicated (\Cref{fig:rand-read-buffered,fig:rand-read-direct,fig:seq-read-buffered,fig:seq-read-direct}) because they do not leave the primary, with a maximum of $16\%$ and $21\%$ additional latency and throughput overheads as the result of decryption and maintaining the list of invoked and pending operations; the latter happens in a critical section (\Cref{sec:critical-path}).
\revisionbarsThree{}{High contention reads (and writes) are the exception (\Cref{fig:contention}), as the conflicting operations are sequentially executed.}

For asynchronous\revisionbarsThree{}{, low contention} writes, \sys{} scales with the number of threads alongside Unreplicated, with a maximum latency and throughput overhead of $43\%$ and $45\%$ respectively (\Cref{fig:rand-write-buffered,fig:rand-write-direct,fig:seq-write-buffered,fig:seq-write-direct}).
With the exception of sequential buffered writes, which is bottlenecked on bandwidth (\Cref{fig:seq-write-buffered}), the primary's disk is the bottleneck.
These results demonstrate that by replicating asynchronous writes in the background, \sys{} is able to scale.

Persisted writes, on the other hand, are bottlenecked on round-trip time to the backups, with a maximum of $433\%$ and $45\%$ latency and throughput overhead (\Cref{fig:rand-write-buffered-fsync,fig:rand-write-direct-fsync,fig:seq-write-buffered-fsync,fig:seq-write-direct-fsync}).
Latency increases by an order of magnitude as the primary waits for the backup to receive all previous operations before acknowledging the write.
Throughput, however, can continue to scale due to this optimization: if multiple synchronous writes concurrently arrive at the backup, then it only acknowledges the write with the highest index, since that acknowledgment implies the receipt of all prior writes.

% \souj{I would move the summary up until the under-performance is not fundamental upfront.}
In summary, \sys{} adds $21\%$ overhead for reads, $45\%$ overhead for asynchronous writes, similar to DM (with the exception of direct writes), and an order of magnitude of overhead for synchronous writes\revisionbarsThree{}{ and high contention operations}.
% For reads and asynchronous writes, \sys{} offers similar overheads to DM (with the exception of direct writes) while in addition providing rollback resistance.
The under-performance of direct writes\revisionbarsThree{}{ and high contention operations} is not fundamental; \sys{} can be modified to cache \revisionbarsThree{writes in memory as well}{and service reads and writes from memory similar to DM}.
For synchronous writes, \sys{} experiences much higher overheads, but, as seen in \Cref{sec:performance-overview}, most applications are designed to use persistence operations sparingly and are minimally affected.

In addition, \sys{} consistently outperforms Replicated in all benchmarks and microbenchmarks (except sequential buffered writes, which can be cached), suggesting it can be eventually added to Azure storage without a significant performance penalty and provide all applications with rollback resistance by default.

\subsection{Crash Consistency and Recovery}
\label{sec:eval-recovery}

\begin{figure}[t]
    \centering
    \hfill
    \includegraphics[width=0.5\linewidth]{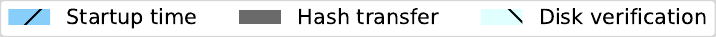}
    \hfill \break
    \centering
    \begin{subfigure}[b]{0.25\linewidth}
         \centering
         \includegraphics[width=\textwidth]{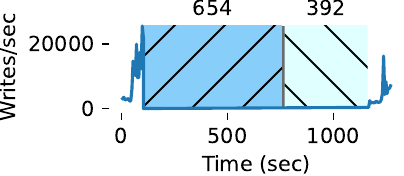}
         \caption{Primary recovery}
         \label{fig:primary-recovery}
    \end{subfigure}
    \begin{subfigure}[b]{0.25\linewidth}
         \centering
         \includegraphics[width=\textwidth]{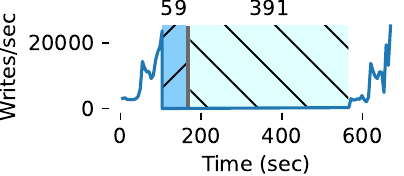}
         \caption{Backup recovery}
         \label{fig:backup-recovery}
    \end{subfigure}
    \caption{Recovery latency.\revisionbars{}{ Each phase is labeled with its latency on top; hash transfer (unlabeled) takes 11 seconds in both experiments.}}
    \label{fig:recovery}
\end{figure}

% So far we have only evaluated the performance of \sys{} when there are no failures.

We simulate rollback attacks on both the primary and the backup in order to analyze recovery latency and correctness.

We first break down the performance impact of recovery in \Cref{fig:recovery}, plotting time against the number of writes processed by the recovering node.
% , since the backup does not process reads.
We start with a standard \sys{} deployment executing PostgreSQL with TPC-C, as in \Cref{sec:performance-overview}.
As it executes, we restart either the primary or backup, overwrite the first 100MB of the 600GB disk to simulate corruption, conduct recovery, then resume TPC-C over the recovered database.
Recovery ends after the last shaded region; the following lull in throughput corresponds to TPC-C setup and is present at the beginning of the graph as well.
The spikes in throughput are a product of the diverse transactions in TPC-C and are unrelated to recovery.

We break the latency of recovery into three main phases in \Cref{fig:recovery}: startup, hash transfer, and disk verification.
Startup time depends on whether Azure physically restarts the machine or redeploys it on a fresh VM; the decision is out of our control.
In our experiment, the primary was physically restarted, and the backup was redeployed, taking 655 and 60 seconds respectively.
Hash transfer is the time it takes for the recovering node to receive the 2.4GB in-memory integrity metadata from the other node; this takes 11 seconds in both tests.
Disk verification is the time it takes for the recovering node to read its the entire disk and perform integrity checks, recovering corrupted pages from the other node when necessary; this takes around 395 seconds in both tests, amounting to 1.5GB/s.
This verification latency is unavoidable for any integrity-preserving application and is comparable to the 600 seconds it takes for dm-crypt + dm-integrity to format the disk.
\revisionbarsOne{}{Recovery time increases linearly with the amount of corrupted disk; 100GB, 300GB, and 600GB of corrupted disk takes an additional 316, 951, and 1941 seconds respectively to recover, around 0.31GB/s.}

We then test the correctness of \sys{} by simulating crashes and verifying the consistency of mounted file systems with ACE~\cite{crashMonkeyAndACE} and xfstests, standard tools for testing crash consistency.
We generate and evaluate 577 tests on ext4 mounted over \sys{}.
\sys{} passes all tests.
\section{Related Work}
\label{sec:related-work}
%We briefly summarize related work.
% \par \textbf{Rollback resistance.}
% Existing application-agnostic solutions for rollback resistance either sacrifice automation or generality.
% Nimble~\cite{nimble} modifies applications to use its API in order to determine when data must be replicated.
% % \changebars{Besides the additional labor required, modifying applications means either manually maintaining two versions or just switching to the modified version, which will have an unnecessary complexity for existing applications even if hidden behind a compiler time flag.}{}
% Narrator~\cite{narrator} assumes that applications are deterministic based on input ordering; accommodating non-determinism requires recording executions with high performance costs~\cite{dOS}.

% \par \textbf{Rollback detection.}
% Solutions that use hashes to verify integrity, but do not keep a backup of the data, are rollback \emph{detecting} but not \emph{resistant}~\cite{vpfs,mlsdisk,dm-x,lcm,rote}.

\par \textbf{Solutions against rollback attacks.}
No existing solution is simultaneously general, automatic, and rollback resistant.
\revisionbarsOne{Nimble~\cite{nimble} sacrifices automation, requiring applications to use its new API for rollback resistance.}{

Unlike \sys{}, which relies on device mappers for automation, Nimble~\cite{nimble} \emph{sacrifices} automation, requiring applications to use its new API for rollback resistance.
What it gains in return are (1) the ability to detect rollback attacks without cloud provider buy-in, and (2) a straightforward implementation over Intel SGX.
\sys{}, in contrast, can (1) only protect against rollback attacks for systems mounted over its device mapper, and (2) requires a subset of the Linux kernel to be a part of the TCB (Trusted Computing Base), which is already part of the assumption for VM-based TEEs but requires significant reimplementation for Intel SGX~\cite{graphene,scone}.

}
Narrator~\cite{narrator} sacrifices \emph{generality}, requiring deterministic execution; accommodating non-determinism requires recording executions with high performance costs~\cite{dOS}.
The remaining solutions sacrifice \emph{resistance}, using hashes \revisionbarsOne{}{or counters} to verify integrity without keeping a backup of the data~\cite{vpfs,mlsdisk,dm-x,lcm,rote,crisp}.
% These solutions often use Merkle trees instead of directly storing hashes on disk in order to detect corruptions of integrity metadata.
% The root of the tree is then either replicated or stored in additional trusted storage.
% ROTE~\cite{rote} similarly replicates a counter instead of a Merkle tree root to track state freshness.

% \heidi{may want to briefly mention: https://www.usenix.org/system/files/conference/usenixsecurity16/sec16_paper_strackx.pdf and https://ieeexplore.ieee.org/document/5958041}

% Rollback resistance or detection has also been manually integrated into consensus protocols~\cite{svr3,engraft,ccf} and databases~\cite{speicher,treaty,fastver}.

\textbf{Device mappers.}
Existing Linux device mappers offer some functionality to enforce confidentiality or integrity of disk.
dm-crypt~\cite{dmCrypt} paired with dm-integrity~\cite{dmIntegrity} or dm-verity~\cite{dmVerity} can provide confidentiality and integrity in the presence of benign, random disk corruptions, but the integrity metadata on disk is vulnerable to rollbacks.
% drbd~\cite{drbd} replicates blocks but cannot detect attacks or compare the freshness of disks.

\textbf{File system semantics.}
Prior work has explored substituting persistent file system operations for fault-tolerant replication~\cite{semanticAwareReplication} outside the context of rollback attacks.
Assise~\cite{assise} uses this strategy for a NVM-backed network file system in order to reduce latency.
SCFS~\cite{scfs} and drbd~\cite{drbd} allow users to toggle between replication schemes to replace disk persistence, while Gaios~\cite{paxosDisk} introduces replacements for file-related system calls that replicate to Paxos state machines.
Blizzard~\cite{blizzard} replicates disk but acknowledges flushes before replication, breaking semantics in order to reduce latency.

% \nc{Isn't there more work on this? Including Blizzard, Rethink the Sync} \david{Rethink the sync is quite different, it models all writes as async, and then prevents the application from outputting until the operation is persisted}

% \textbf{Consensus.} \nc{I think this entire section can go}
% \sys{}'s initialization and critical path largely follow that of traditional consensus protocols~\cite{paxos,raft}.
% In the terminology of Paxos~\cite{paxos}, the primary in \sys{} can be thought of as a proposer, and both primary and backup additionally inhabit the roles of acceptors and state machines.
% Unlike consensus protocols, \sys{} only maintains $f+1$ machines and depends on user intervention for liveness, and has weaker semantics for state machine execution.
% \sys{} also allows non-blocking execution of non-conflicting writes on the backups, similar to Generalized Paxos~\cite{generalizedPaxos}, with additional ordering restrictions in order to accommodate \texttt{REQ\_PREFLUSH}.

\section{Conclusion}
\label{sec:conclusion}

\sys{} provides general, automatic, low-overhead rollback resistance 
% \changebars{in a field where high performance and security are traditionally only achievable through careful code modifications.
% \sys{} achieves this }{}
by marrying the inherent asynchrony and concurrency of disk consistency with fault tolerant replication.
\sys{}'s low overhead and generality leads us to believe that it can be transparently integrated into cloud storage systems with minimal effort.

% \heidi{future work: Could you use RDMA to accelerate replication to backups?} \david{Don't think we have space here}
\section*{Acknowledgements}
This work would not be possible without support from Amaury Chamayou, Eddy Ashton, Joe Hellerstein, Ittai Abraham, Mic Bowman, Michael Steiner, Bruno Vavala, Vijay Chidambaram, Srinath Setty, Shadaj Laddad, Dimitra Giantsidi, Alex Miller, Tyler Hou, and Darya Kaviani.
This work was supported by gifts from AMD, Anyscale, Google, IBM, Intel, Microsoft, Mohamed Bin Zayed University of Artificial Intelligence, Samsung SDS, Uber, VMware, and Ripple.

\bibliographystyle{plainurl}
\bibliography{refs}

\tr{
\newpage

\appendix
\section{Recovery}
\label{sec:appendix-recovery}

Before intercepting operations on the critical path, we must ensure that the primary and backups are all executing within TEEs, communicating with each other over secure channels, and cannot be impersonated by a malicious third party.

\textbf{Initialization.}
Initialization achieves these goals through remote attestation and TLS channels.
After the primary and backups perform attestation, they are given the secret key for encryption and the addresses and roles of each member, which they use to establish secure channels and begin execution.
The process becomes complex once recovery is taken into consideration.

\textbf{Recovery protocol.}
Our recovery protocol is based on the reconfiguration protocol from Matchmaker Paxos~\cite{matchmakerPaxos}, with CCF~\cite{ccf} tracking configurations as the matchmakers.

To track configurations, each node maintains a \texttt{seenBallot}, representing the latest configuration it has seen, and a \texttt{ballot}, representing the latest configuration it has been a member in.
Each protocol message must be tagged with the \texttt{ballot} or \texttt{seenBallot} field of the sender, and recipients only accepts messages if their local ballot is no fresher than the messages' ballot.
Intuitively, this means that nodes do not process requests from stale configurations.

We first modify the initialization protocol so that the initial configuration is committed to CCF.
After attestation, nodes are given a \texttt{seenBallot} representing their configuration \texttt{conf}.
The primary then sends \texttt{MatchA<seenBallot$_p$, conf>} to CCF.
CCF adds the configuration to \texttt{allConf} and responds with \texttt{MatchB<ballot$_c$, allConf>}, where \texttt{ballot$_c$} is the highest ballot observed by CCF.
Upon receiving \texttt{MatchB}, the primary checks if $\texttt{ballot}_c = \texttt{seenBallot}_p$ and if $\texttt{allConf} = \{\texttt{conf}\}$; if so, it sets \texttt{ballot}$_p$ to \texttt{seenBallot}$_p$ and can begin intercepting reads and writes.

A recovering node (including backups) follows the same process but will receive at least one prior configuration.
It then preempts all nodes from prior configurations in \texttt{allConf} by broadcasting \texttt{P1a<seenBallot$_i$>}.

Upon receiving \texttt{P1a<seenBallot$_i$>}, each node $j$ sets its $\texttt{seenBallot}_j$ to $\texttt{seenBallot}_i$ if $\texttt{seenBallot}_i$ is larger, then attempts to aid recovery by responding with \texttt{P1b<seenBallot$_j$, ballot$_j$, hashes$_j$, disk$_j$, writeIndex$_j$>}, where \texttt{hashes$_j$} are its in-memory hashes and \texttt{disk}$_j$ is its disk.
The recovering node ignores any \texttt{P1b}s where $\texttt{seenBallot}_i \ne \texttt{seenBallot}_j$.

After receiving at least 1 \texttt{P1b} from each configuration, the recovering node knows that no prior configuration can make progress and now selects the \textit{designated} node to recover its state from.
The designated node $d$ is the node with the highest $(\texttt{ballot}, \texttt{writeIndex})$ pair, ordered lexicographically.
The recovering node replaces its disk with \texttt{disk}$_d$ and sets its hashes to \texttt{hash}$_d$.
It then uses that hash and disk to update other nodes in its new configuration \texttt{conf}$_i$ by sending \texttt{Reconfig<seenBallot$_i$, hashes$_d$, disk$_d$, writeIndex$_d$>}; those nodes replace their own hashes and disks similarly.

We optimize reconfiguration by omitting \texttt{hashes} from \texttt{P1b} and \texttt{disk} from both \texttt{P1b} and \texttt{Reconfig}, only requesting them when necessary.
The recovering node first requests \texttt{hashes}$_d$ only from the designated node.
It then performs an integrity scan over its local disk using \texttt{hashes}$_d$, and only if any individual pages do not pass the integrity check, requests the page from the designated node.
If the designated node fails during this process, the \texttt{hashes} are requested from another designated node (there must be another, since there are at most $f$ failures and each configuration has \changebars{}{at least }\heidi{changebars as paper relaxes this to N nodes}$f+1$ nodes), and the integrity scan is restarted with the new hashes.
The recovering node then sends \texttt{Reconfig} to the other nodes in \texttt{conf}$_i$, which also perform integrity scans and request corrupted pages from the recovering node.
If the designated node is also a node in \texttt{conf}$_i$, then it does not need to process \texttt{Reconfig}.
This is the case for any recovery that replaces a single crashed node.

Any node that completes disk synchronization then sets its \texttt{ballot} to \texttt{seenBallot}$_i$, \texttt{writeIndex} to \texttt{writeIndex}$_d$, \texttt{hashes} to \texttt{hashes}$_d$, and can resume operation.

Once recovery is complete, old configurations can be removed from \texttt{allConf} in CCF through a garbage collection protocol~\cite{matchmakerPaxos} and safely shut down.
\section{Correctness}
\label{sec:correctness}

We provide a proof sketch for the following theorem:
\begin{theorem}
\label{theorem:rollbaccine-is-correct}
All histories produced by \sys{} are block device crash consistent.
\end{theorem}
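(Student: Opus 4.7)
The plan is to prove the theorem by induction on the number of crashes in the history, with the inductive hypothesis that for a Rollbaccine history $\mathcal{H} = \mathcal{E}_0 C_0 \mathcal{E}_1 \ldots C_{i-1} \mathcal{E}_i$, there exist durable cuts $\mathcal{D}_0, \ldots, \mathcal{D}_{i-1}$ such that $\mathcal{D}_0 \ldots \mathcal{D}_{i-1} \mathcal{E}_i$ is linearizable. The base case ($i=0$) reduces to showing that a crash-free Rollbaccine execution is linearizable. The inductive step requires exhibiting $\mathcal{D}_i$ after crash $C_i$ such that $\mathcal{D}_0 \ldots \mathcal{D}_i \mathcal{E}_{i+1}$ remains linearizable.

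For the linearizability argument (both base case and the tail of the inductive step), I would construct the sequential history $\mathcal{S}$ using the \texttt{writeIndex} assigned by the primary as the linearization order. The key sublemma is that the atomic critical section in Section 6.1 (assigning \texttt{writeIndex} and inserting into the pending queue / invoked tree) serializes conflicting operations on the same block: any two writes $W, W'$ with overlapping live intervals on block $b$ are forced into sequential execution, so the response of the earlier-indexed write precedes the invocation of the later-indexed one in $\mathcal{S}$. Reads-see-writes then follows because reads consult the in-memory hash corresponding to the most recent submitted write to $b$, and integrity checks abort on mismatch. Finally, I would argue that the happens-before relation is preserved: happens-before criteria (1) and (4) fall out of the critical section, criterion (2) is immediate because crashes terminate eras, and criterion (3) for \texttt{REQ\_PREFLUSH} holds because Section 6.1.3 blocks the response of a preflush until all prior \texttt{writeIndex}es have been acknowledged, so any previously-invoked operation completes before the preflush does.

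For the crash step, I would exhibit $\mathcal{D}_i$ as the state stored on the backup chosen by the recovery protocol. Two things must be verified. First, that this state is a durable cut of $\mathcal{E}_i$: I would show by invariant over the steady-state protocol that every backup's on-disk state is always a durable cut of the primary's history. Writes with \texttt{REQ\_FUA}/\texttt{REQ\_PREFLUSH} must block on at least $f$ acknowledgments from backups before the primary responds, so any such response in $\mathcal{E}_i$ implies that the corresponding write is present on the selected backup (clause 1 of Definition 7). Closure under happens-before (clause 2) holds because backups apply writes in \texttt{writeIndex} order modulo non-conflicting reordering, and the ordering mechanism of Section 6.1.2 ensures that a same-block predecessor is always submitted first. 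Clause 3 (no pending invocations) is trivially satisfied by an on-disk snapshot. Second, I must argue that $\mathcal{D}_0 \ldots \mathcal{D}_i \mathcal{E}_{i+1}$ is linearizable; this follows because recovery reinstalls $\mathcal{D}_i$ as the starting disk state for $\mathcal{E}_{i+1}$, so the linearization constructed above for $\mathcal{E}_{i+1}$ extends the already-linearizable prefix.

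The main obstacle will be the steady-state invariant on backups. Because backups are allowed to diverge and process non-conflicting writes out of order, we must carefully characterize what ``the backup's on-disk state'' even means at an arbitrary instant and show that every reachable such state is closed under happens-before. The subtle case is \texttt{REQ\_PREFLUSH}: the invariant must rule out a backup state that contains a preflush write yet omits some earlier \texttt{writeIndex}ed write from another thread. The argument will hinge on the fact that the primary only acknowledges a preflush after backups report receipt of all smaller \texttt{writeIndex}es, combined with the backup's in-order submission discipline from Section 6.1.2, so any visible preflush in $\mathcal{D}_i$ forces all its happens-before predecessors to also be in $\mathcal{D}_i$.
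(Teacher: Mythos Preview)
Your proposal follows essentially the same decomposition as the paper: induction on the number of crashes, a lemma that the primary's crash-free era over a given starting state is linearizable, and a lemma that each backup's state is a durable cut of the primary's era. The paper's construction of the sequential witness differs cosmetically---it creates per-block abstract threads and shifts responses earlier rather than linearizing by \texttt{writeIndex}---but the two constructions serve the same purpose and are interchangeable.

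There is, however, one genuine gap. You write that you would ``exhibit $\mathcal{D}_i$ as the state stored on the backup chosen by the recovery protocol'' and then discharge this by the steady-state invariant that \emph{every} backup's on-disk state is a durable cut of the primary's history. But that invariant only covers backups sharing a configuration with the primary of era $\mathcal{E}_i$; it does not explain why the recovery protocol actually lands on such a node rather than on a stale replica from an earlier configuration that an adversary has kept alive, or why split-brain cannot cause two ``primaries'' to extend different durable cuts. The paper devotes a separate lemma to precisely this point, arguing via the Matchmaker Paxos reconfiguration protocol that the recovering node necessarily obtains the disk and hashes of the \emph{previous active} primary or one of \emph{its} backups: the new primary commits its configuration to the CCF matchmaker, learns all prior configurations, preempts each of them with a higher ballot, and only then selects the designated node with lexicographically highest $(\texttt{ballot}, \texttt{writeIndex})$. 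Without this reconfiguration-correctness argument your inductive step does not close, since nothing prevents $\mathcal{D}_i$ from being a durable cut of some $\mathcal{E}_j$ with $j<i$. You should add this as a third lemma alongside your linearizability and durable-cut invariants.
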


We must first map the behaviors of \sys{} to the terms used by block device crash consistency.
A node in \sys{} is \textit{active} if $\texttt{ballot}_p = \texttt{seenBallot}_p$; only active primaries can process read and write messages from the application.
A crash $C$ is any period of time during which there is no active primary; this encompasses failures due to integrity violations detected by \sys{}, signaling a rollback attack.
An invocation $O_{inv}$ is any read or write intercepted by the active primary, and a response $O_{res}$ is any response to invocations returned by the active primary to the upper layer.

Note how the definitions of invocation and responses differ from their definitions in block device crash consistency, which define those operations over the block device (instead of the active primary of \sys{}).
The active primary in \sys{} acts as an additional layer between the application and the block device, delaying invocations to the block device to prevent concurrent accesses to the same blocks (\Cref{sec:async-writes-primary}), removing read responses with that fail integrity checks (\Cref{sec:reads}), and synchronous write responses until they are replicated (\Cref{sec:sync-writes}).

% We will construct the proof sketch by establishing the following lemmas.
% (1) In the absence of crashes, the history produced by the primary's disk is block device crash consistent.
% (2) In the absence of crashes, the history produced by the backup's disk is a durable cut of the history on the primary.
% (3) After a crash-then-reconfiguration, the primary's disk recovers to the disk of either the primary or backup of the last configuration where reconfiguration completed.

We start by establishing that the active primary of \sys{} produces linearizable histories in the absence of crashes.

\begin{lemma}
\label{lemma:primary-consistent}
Given an encrypted disk, a crash-free durable cut $\mathcal{D}$ representing its disk state, its corresponding \texttt{hashes}, and a subsequent era $\mathcal{E}$ produced by the primary, the combined history $\mathcal{D}\mathcal{E}$ is linearizable.
\end{lemma}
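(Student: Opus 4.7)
The plan is to construct an explicit linearization of $\mathcal{D}\mathcal{E}$ by using the primary's internal serialization structures (the invoked tree and pending queue from \Cref{fig:pending-queue-invoked-tree}) to pick a linearization point for every operation, and then verify each clause of \Cref{def:linearizable-history} in turn. Concretely, I would define the linearization point of a write to be the moment it is atomically inserted into the invoked tree (either directly when no conflict exists, or when it is popped from the pending queue and inserted), and the linearization point of a read to be the moment its integrity check succeeds against the in-memory hash. For the operations recorded by $\mathcal{D}$, I would treat their pre-existing linearization points as inherited from the prior era, since the precondition states that \texttt{hashes} agrees with the disk contents dictated by $\mathcal{D}$.

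Next I would build $\mathcal{S}$ as the sequential history that lists every completed invocation in $\mathcal{D}\mathcal{E}$ according to the total order of linearization points, with each invocation immediately followed by its response. Pending operations with no linearization point are handled by choosing the appropriate member of $\textit{trunc}(\textit{compl}(\mathcal{H}))$: any write that was submitted to disk but whose acknowledgement never came back is placed in $\textit{compl}(\mathcal{H})$, while pending writes still blocked on the queue are removed via $\textit{trunc}$. This immediately gives clauses (1) and (2) of \Cref{def:linearizable-history}: sequentiality holds by construction, and per-thread equivalence holds because the primary dispatches at most one synchronous response per thread at a time.

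The substantive work is in clauses for reads-see-writes and happens-before. For reads-see-writes, the central observation is that same-block operations are strictly serialized through the invoked tree: once a write $W(b)$ is linearized, its hash is installed in memory before any later read or write on $b$ leaves the pending queue, so the next read of $b$ must verify against and therefore return the value of $W$. For happens-before, I would reason by cases on the four clauses of \Cref{def:happens-before}: clauses (1) and (4) follow from the per-thread FIFO discipline of the primary's critical section; clause (2) is vacuous in a crash-free era; and clause (3), the \texttt{REQ\_PREFLUSH} case, follows because the primary does not return such a write until all backups have acknowledged every lower \texttt{writeIndex} (\Cref{sec:sync-writes}), which in the local single-machine view also means every prior invocation has already been assigned a smaller linearization point.

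\textbf{Main obstacle.} I expect the delicate point to be the seam between $\mathcal{D}$ and $\mathcal{E}$: the lemma hands us only the \emph{hashes} and a durable cut, not the physical operation order that produced them, yet the linearization of $\mathcal{E}$ must agree with happens-before inherited from writes already in $\mathcal{D}$. I would handle this by arguing that any happens-before edge that crosses the seam necessarily points from $\mathcal{D}$ into $\mathcal{E}$ (crashes, by assumption absent inside $\mathcal{E}$, are the only way to reverse it), and that the primary's integrity check at the start of $\mathcal{E}$ verifies that the on-disk state is exactly the one described by $\mathcal{D}$'s durable cut, so every read in $\mathcal{E}$ that sees a $\mathcal{D}$-value is consistent with placing all of $\mathcal{D}$ before all of $\mathcal{E}$ in $\mathcal{S}$. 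The remaining bookkeeping is then routine.
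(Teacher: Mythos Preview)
Your linearization-point construction is sound and would go through, but it differs from the paper's route. The paper does not pick explicit linearization points; instead it (i) remaps operations to one abstract thread per block, exploiting the fact that the invoked-tree/pending-queue machinery already renders same-block accesses strictly sequential in invocation order, and then (ii) obtains $\mathcal{S}$ by simply sliding each response leftward to sit immediately after its invocation. That buys a one-line happens-before argument: every non-crash base case of $\prec$ has the shape ``some $O_{res}$ precedes some $O'_{inv}$,'' and since responses only move earlier, any such edge is preserved automatically. Your construction reaches the same conclusion but via a four-way case split.

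That case split is where your justifications wobble. For clause~(1) you invoke ``per-thread FIFO discipline,'' but clause~(1) concerns same-\emph{block} ordering across arbitrary threads, not per-thread order. For clause~(3) you invoke backup acknowledgment of lower \texttt{writeIndex}es, which is a replication-side property irrelevant to the primary's local linearizability; this lemma is about the primary in isolation, and backups enter only in \Cref{lemma:backup-consistent}. Neither slip is fatal, because both clauses follow immediately from where your linearization points sit: if $O_{res}$ precedes $O'_{inv}$ in $\mathcal{H}'$, then the linearization point of $O$ lies before $O_{res}$, which lies before $O'_{inv}$, which lies before the linearization point of $O'$, regardless of block, thread, or flush flag. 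Replace the case analysis with that single observation and the argument tightens considerably. Your explicit treatment of the $\mathcal{D}$--$\mathcal{E}$ seam, by contrast, is more careful than the paper's brief handling and is worth keeping.
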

\begin{proof}
To prove that $\mathcal{D}\mathcal{E}$ is linearizable, we must construct a sequential history $\mathcal{S}$ that respects reads-see-writes, is equivalent to some $\mathcal{E}' \in \textit{trunc}(\textit{compl}(\mathcal{D}\mathcal{E}))$, and contains a superset of the happens-before relationships in $\mathcal{D}\mathcal{E}$.

We create $\mathcal{S}$ by (1) removing pending invocations in $\mathcal{E}$, and (2) creating abstract threads to isolate accesses to each block (creating $\mathcal{E}'$), then (3) shifting responses earlier in each thread such that matching responses immediately follow each invocation.

$\mathcal{S}$ is sequential by construction.

We know that \sys{} processes operations over the same block sequentially based on invocation order (\Cref{sec:async-writes-primary}), which is unchanged in $\mathcal{S}$.
This means that each read must see the previous write, even if the read invocation precedes the write response.
This holds despite rollback attacks, because \sys{} enforces integrity checks for reads (which would otherwise fail).
Since responses immediately follow each invocation in $\mathcal{S}$, each write-read invocation pair satisfies the reads-see-writes precondition and indeed returns the value of the previous write.
Therefore $\mathcal{S}$ respects reads-see-writes.

We know that for all threads $t$, $\mathcal{E}'[t] = \mathcal{S}[t]$ by construction.

We also know that $\mathcal{S}$ preserves all happens-before relationships, because responses were moved earlier (so any invocation that happens-after a response still happens-after it).

By definition, $\mathcal{D}\mathcal{E}$ is linearizable.
\end{proof}

Under the same circumstances, each active backup produces a durable cut of the era produced by the active primary.

\begin{lemma}
\label{lemma:backup-consistent}
Given an encrypted disk, a durable cut $\mathcal{D}$ representing its disk state, its corresponding \texttt{hashes}, and subsequent eras $\mathcal{E}_1, \mathcal{E}_2$ produced by the primary and a backup respectively, $\mathcal{D}\mathcal{E}_2$ is a durable cut of $\mathcal{D}\mathcal{E}_1$.
\end{lemma}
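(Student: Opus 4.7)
The plan is to verify the three conditions of Definition~\ref{def:durable-cut} by (i) specifying a completion $\mathcal{H}' \in \textit{trunc}(\textit{compl}(\mathcal{D}\mathcal{E}_1))$ such that $\mathcal{D}\mathcal{E}_2$ sits inside $\mathcal{H}'$ as a subhistory, and (ii) arguing each clause in turn. The history $\mathcal{E}_2$ on the backup consists only of writes the primary dispatched, each carrying a monotonically increasing $\texttt{writeIndex}_p$ that reflects the primary's atomic invocation order (Section~\ref{sec:async-writes-primary}). Because the backup admits writes only when $\texttt{writeIndex}_p = \texttt{writeIndex}_b + 1$ (Section~\ref{sec:async-writes-backup}), the set of writes acknowledged by the backup in $\mathcal{E}_2$ is exactly a prefix of the primary's writeIndex sequence. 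I choose $\mathcal{H}'$ by inserting matching responses into $\mathcal{E}_1$ for every write in that prefix and truncating any pending invocation whose writeIndex lies beyond it; reads on the primary are left in $\mathcal{H}'$ but do not appear in $\mathcal{D}\mathcal{E}_2$.

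For condition (1), if a persistent write $W$ (tagged \texttt{REQ\_FUA} or \texttt{REQ\_PREFLUSH}) has both $W_{inv}$ and $W_{res}$ in $\mathcal{H}'$, then the primary only issued $W_{res}$ after the backup acknowledged $W$ and every write with a smaller writeIndex (Section~\ref{sec:sync-writes}). Hence $W_{res}$ is present in $\mathcal{D}\mathcal{E}_2$ by construction. Condition (3) follows from the truncation step: any invocation on the backup whose disk write has not yet completed is removed, which preserves subhistory ordering because responses strictly follow invocations within a single backup thread.

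The heart of the argument, and the step I expect to be the main obstacle, is condition (2), the happens-before closure: if $V \in \mathcal{D}\mathcal{E}_2$ and $V' \prec V$ in $\mathcal{H}'$, then $V' \in \mathcal{D}\mathcal{E}_2$. I would induct on the length of the happens-before chain using the four clauses of Definition~\ref{def:happens-before}. Clause (2) is vacuous inside an era. Clause (1) (same-block prior completion) follows because the primary serializes conflicting writes via the pending queue and invoked tree, so $V' \prec V$ on the same block forces $V'$ to have a smaller writeIndex, and the backup reproduces the same per-block serialization, producing $V'_{res}$ before $V_{res}$. Clause (3) (the \texttt{REQ\_PREFLUSH} rule) is the delicate case: a prior operation $V'$ on a different block is not serialized with $V$ by the backup's invoked tree, but $V$'s response only appears in $\mathcal{D}\mathcal{E}_2$ after the backup has admitted $V$, and by the in-order dispatch this means $V'$ has already been admitted; since $V'$ touches a different block, nothing blocks it on the backup, so it completes and its response can be placed before $V$'s in the constructed subhistory. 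Clause (4) is immediate by induction, completing the verification that $\mathcal{D}\mathcal{E}_2$ meets all three durable-cut conditions.
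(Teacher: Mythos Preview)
Your proposal is correct and follows essentially the same approach as the paper: both arguments hinge on (i) the backup holding a \texttt{writeIndex}-prefix of the primary's writes, so happens-before closure follows from in-order admission, and (ii) the primary withholding $W_{res}$ for persistent writes until the backup acknowledges, so condition~(1) of Definition~\ref{def:durable-cut} is forced. The only notable difference is that the paper collapses your clause-by-clause case analysis into a single observation---every clause of Definition~\ref{def:happens-before} requires $V_1$ to precede $V_2$ in the history, hence $V_1$ receives the smaller \texttt{writeIndex} and is admitted first---which yields condition~(2) in one stroke without separately treating the \texttt{REQ\_PREFLUSH} case you flagged as delicate.
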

\begin{proof}
We first show that the backups respects any happens-before relationships on the primary.
Writes are assigned \texttt{writeIndex} by the active primary based on invocation order.
By definition of happens-before, $V_1 \prec V_2$ is only possible if $V_1$ precedes $V_2$, which implies that \texttt{writeIndex} of $V_1$ is also less than \texttt{writeIndex} of $V_2$.
Therefore, if a backup submitted $V_2$ to disk, it must have already submitted $V_1$; formally, $V_2 \in \mathcal{E}_2$ implies $V_1 \in \mathcal{E}_2$.

We now show that the backups must contain all completed synchronous writes.
The primary does not return synchronous writes to the application until the backups acknowledge that they have received that write and all prior writes with lower \texttt{writeIndex}es.
Formally, $W_{res}(b,val,sync) \in \mathcal{E}_1$ implies $W_{res}(b,val,sync) \in \mathcal{E}_2$ if $sync$ contains \texttt{REQ\_FUA} or \texttt{REQ\_PREFLUSH}.
By the definition of durable cut, $\mathcal{E}_2$ is a durable cut of $\mathcal{E}_1$, therefore $\mathcal{D}\mathcal{E}_2$ is a durable cut of $\mathcal{D}\mathcal{E}_1$.
\end{proof}

After reconfiguration, the current active primary contains either the disk of the previous active primary or a previous active backup.
The \textit{current active} primary is the one with the highest \texttt{ballot} and \texttt{writeIndex}; a \textit{previous} active primary is one that was current before reconfiguration.
A current or previous active backup is a backup with a \texttt{ballot} matching the current or previous active primary.

\begin{lemma}
\label{lemma:recover-to-primary-or-backup}
During reconfiguration, the current primary or backup must recover the disk state and \texttt{hashes} of either the previous active primary or its backups.
\end{lemma}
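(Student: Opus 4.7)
The plan is to prove this by showing that the reconfiguration protocol's selection rule---picking the node with the highest $(\texttt{ballot}, \texttt{writeIndex})$ pair among collected \texttt{P1b} responses---necessarily identifies a node from the previous active configuration. First I would establish that the recovering node obtains a complete picture of \texttt{allConf} from CCF via \texttt{MatchB}, and then gathers at least one \texttt{P1b} response from each configuration represented there. Since each configuration contains $N \geq f+1$ nodes and there are at most $f$ failures, every previously committed configuration, including the previous active one, must have at least one responder.

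Next I would argue that the designated node's $\texttt{ballot}$ must equal that of the previous active configuration. Ballots are assigned by CCF in a totally ordered fashion, so any configuration older than the previous active one has a strictly smaller ballot and cannot win the lexicographic comparison. Conversely, no configuration can have a ballot strictly between the previous active one and the current one: such a configuration would only exist in \texttt{allConf} if it had successfully committed to CCF, contradicting the assumption that the previous active configuration was the most recent one to complete recovery. Hence the designated node lies in the previous active configuration.

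Within that configuration, I would then show that \texttt{writeIndex} ordering forces the designated node to be either the previous active primary or one of its active backups. The primary assigns \texttt{writeIndex} monotonically at invocation time and forwards updates to backups, so the primary's \texttt{writeIndex} dominates any backup's. If the previous primary survived and responds, it is the designated node; otherwise some surviving backup with the highest \texttt{writeIndex} among responders is chosen. By Lemma~\ref{lemma:backup-consistent}, that backup's state is a durable cut of the primary's, so recovering from it yields the \texttt{hashes} and disk of a member of the previous active configuration, as required.

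The main obstacle I expect is ruling out interference from partially-completed intermediate reconfigurations: a scenario in which some node registered a configuration with CCF, began recovery, but crashed before completing \texttt{Reconfig}. Such a node might appear in \texttt{allConf} with a ballot larger than the previous active configuration's. The key lever is that \texttt{P1a}/\texttt{P1b} preemption and the requirement $\texttt{ballot} = \texttt{seenBallot}$ for activity ensure that such a node either never adopted any disk state (so its \texttt{writeIndex} is stale and inherited from the previous active configuration's designated node) or was itself the one that recovered from the previous active configuration, in which case its state is already a faithful copy. Either way, the disk and \texttt{hashes} ultimately trace back to the previous active primary or one of its backups.
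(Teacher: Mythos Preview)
Your second paragraph contains a genuine error: you claim that no configuration can have a ballot strictly between the previous active one and the current one, on the grounds that such a configuration ``would only exist in \texttt{allConf} if it had successfully committed to CCF, contradicting the assumption that the previous active configuration was the most recent one to complete recovery.'' But committing a configuration to CCF via \texttt{MatchA}/\texttt{MatchB} is \emph{not} the same as completing recovery and becoming active; a node only sets its \texttt{ballot} (and becomes active) after finishing disk synchronization. So there can be arbitrarily many intermediate configurations in \texttt{allConf} with ballots between $\texttt{ballot}_y$ and $\texttt{ballot}_x$, none of which ever became active. Your argument that the designated node must lie in $\texttt{conf}_y$ therefore does not go through, and in fact the designated node may well come from one of these intermediate configurations if some node there partially synchronized and set its \texttt{ballot}.

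You partially recognize this in your final paragraph, but treating it as an ``obstacle'' after already asserting the contrary leaves the proof internally inconsistent, and the sketch there is not tight enough to stand alone. The paper's proof handles exactly this issue by induction on the gap $\texttt{ballot}_x - \texttt{ballot}_y$: in the base case the gap is $1$ and there are no intermediate configurations; in the inductive step, any intermediate configuration's nodes, since no primary was active after $\texttt{conf}_y$, could only have synchronized their disk and \texttt{hashes} from $\texttt{conf}_y$ or (by the induction hypothesis) from something already equivalent to it. Thus whatever the current primary selects as designated, its state traces back to the previous active primary or one of its backups. Your third paragraph's \texttt{writeIndex} analysis is also more than the lemma requires---the statement only asks that the recovered state come from \emph{some} member of the previous active configuration, not that it be the freshest such member.
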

\begin{proof}
Reconfiguration follows the protocol of Matchmaker Paxos~\cite{matchmakerPaxos}.
The proof can be derived from that of Matchmaker Paxos; we provide its intuition here.

We prove inductively on the difference between \texttt{ballot}$_x$ on the current primary $x$ and \texttt{ballot}$_y$ on the previous active primary $y$.
Primary $y$ could have only become active by either completing initialization or reconfiguration by sending \texttt{MatchA} to CCF and adding \texttt{conf}$_y$ to \texttt{allConf}.

In the base case, if $\texttt{ballot}_x = \texttt{ballot}_y + 1$, then when primary $x$ sends \texttt{MatchA} to CCF and receives \texttt{allConf} in \texttt{MatchB}, then $\texttt{conf}_y$ must be the highest-\texttt{ballot} configuration in \texttt{allConf}.
Primary $x$ (and its backups) must synchronize their disks and \texttt{hashes} from either primary $y$ or its backups.

In the inductive case, $\texttt{ballot}_x = \texttt{ballot}_y + i + 1$. 
Because there has been no active primaries since the configuration associated with \texttt{ballot}$_y$, no writes could have been made to disk, and each primary and backup must have synchronized their disks from either primary $y$, its backups, or some machine with state equivalent to those machines, and so must primary $x$.
\end{proof}

Combined, the lemmas state that:
at initialization, the current active primary's disk state is linearizable (\Cref{lemma:primary-consistent}), so prior to any crashes, the primary's disk is also block device crash consistent.
Using induction on crashes, we assume that the $i$-th active primary's disk is block device crash consistent.
In the inductive case, after $i+1$ crashes, the current active primary must recover to either the disk of the $i$-th previous active primary or its backups (\Cref{lemma:recover-to-primary-or-backup}), whose disks are durable cuts (\Cref{lemma:backup-consistent}) of the $i$-th primary's, which is still block device crash consistent by the induction hypothesis.
Therefore, whether the primary recovers from the history or its durable cut, it will still produce a linearizable history (\Cref{lemma:primary-consistent}).
By definition, all histories produced by \sys{} must be block device crash consistent (\Cref{theorem:rollbaccine-is-correct}).
}{}

\tr{}{
% Articles V4mod001-V4mod080 use
\received{July 2025}
\received[revised]{October 2025}
\received[accepted]{November 2025}
}

\end{document}